\def\ps@headings{%
\def\@oddhead{\mbox{}\scriptsize\rightmark \hfil \thepage}%
\def\@evenhead{\scriptsize\thepage \hfil \leftmark\mbox{}}%
\def\@oddfoot{}%
\def\@evenfoot{}}
\newtheorem{lemma}{Lemma}
\newtheorem{example}{Example}
\newcommand{\mS}{\mathcal{S}}
\newcommand{\be}{\begin{eqnarray}}
\newcommand{\ee}{\end{eqnarray}}
\newcommand{\ben}{\begin{eqnarray*}}
\newcommand{\een}{\end{eqnarray*}}
\newcommand{\dref}[1]{(\ref{#1})}
\newcommand{\expect}[1]{{\mathbb E} \Bigl[ #1\Bigr]}
\newcommand{\prob}[1]{{\mathbb P} \left( #1\right)}
\newcommand{\mG}{\mathcal{G}}
\newcommand{\mV}{\mathcal{V}}
\newcommand{\mE}{\mathcal{E}}
\newcommand{\mI}{\mathbf{I}}
\begin{document}
\title{Opinion Dynamics in Social Networks: A Local Interaction Game with Stubborn Agents}
\author{\IEEEauthorblockN{Javad Ghaderi and R. Srikant\\}
\IEEEauthorblockA{Department of ECE and Coordinated Science Lab.\\
 University of Illinois at Urbana-Champaign\\
\{jghaderi, rsrikant\}@illinois.edu
}}

\maketitle
\begin{abstract}
The process by which new ideas, innovations, and behaviors spread through a large social network can be thought of as a networked interaction game: Each agent obtains information from certain number of agents in his friendship neighborhood, and adapts his idea or behavior to increase his benefit.  In this paper, we are interested in how opinions, about a certain topic, form in social networks. We model opinions as continuous scalars ranging from $0$ to $1$ with $1$ ($0$) representing extremely positive (negative) opinion. Each agent has an initial opinion and incurs some cost depending on the opinions of his neighbors, his initial opinion, and his stubbornness about his initial opinion. Agents iteratively update their opinions based on their own initial opinions and observing the opinions of their neighbors. The iterative update of an agent can be viewed as a myopic cost-minimization response (i.e., the so-called best response) to the others' actions. We study whether an equilibrium can emerge as a result of such local interactions and how such equilibrium possibly depends on the network structure, initial opinions of the agents, and the location of stubborn agents and the extent of their stubbornness. We also study the convergence speed to such equilibrium and characterize the convergence time as a function of aforementioned factors. We also discuss the implications of such results in a few well-known graphs such as Erdos-Renyi random graphs and small-world graphs.
\end{abstract}
\section{Introduction}\label{sec:intro}
Rapid expansion of online social networks, such as friendships and information networks, in recent years has raised an interesting question: how do opinions form in a social network? The opinion of each person is influenced by many factors such as his friends, news, political views, area of professional activity, and etc. Understanding such interactions and predicting how specific opinions spread throughout social networks has triggered vast research by economists, sociologist, psychologies, physicists, etc.

We consider a social network consisting of $n$ agents. We model the social network as a graph $\mG(\mV,\mE)$ where agents are the vertices and edges indicate pairs of agents that have interactions. Agents have some private initial opinions and iteratively update their opinions based on their own initial opinions and observing the opinions of their neighbors. We study whether an equilibrium can emerge as a result of such local interactions and how such equilibrium possibly depends on the graph structure and initial opinions of the agents. In the interaction model, we also incorporate \emph{stubbornness} of agents with respect to their initial opinions and investigate the dependency of the equilibrium on such stubborn agents. Characterizing the convergence rate to the equilibrium as a function of graph structure, location of stubborn agents and their levels of stubbornness is another goal of the current paper.
\subsection{Related Work}
There has been an interesting line of research trying to explain emergence of new phenomenon, such as spread of innovations and new technologies, based on local interactions among agents, e.g., \cite{kandori, kleinberg2, ellison, saberi}. Roughly speaking, a coordination game is played between the agents in which adopting a common strategy has a higher payoff. Agents behave according to a noisy version of the best-response dynamics. Introducing the noise eliminates the possibility of multiple equilibria and drives the system to a particular equilibrium in which all agents take the same action \cite{kleinberg2, ellison, saberi}. In particular, \cite{ellison} studied the rate of convergence for two scenarios of extreme interaction, namely, a complete graph and a ring graph, and showed that the dynamics converges very slowly in the complete graph and very fast in the ring network. Reference \cite{saberi} studies the convergence time for general networks and establishes similar results that show highly-connected non-local graphs exhibit slow convergence and poorly connected, low dimensional graphs exhibit fast convergence.

There is a rich and still growing literature on social learning using a Bayesian perspective where individuals observe the actions of others and update their beliefs iteratively about an underlying state variable, e. g., \cite{bikchandani, banraji, acemoglu}. There is also opinion dynamics based on non-Bayesian models, e. g., those in \cite{degroot, ellison2, bala, krause, borkar, parandeh}. In particular, \cite{parandeh} investigates a model in which agents meet and adopt the average of their pre-meeting opinions and there are also forceful agents that influence the opinions of others but may not change their opinions. Under such a model, and assuming that even forceful agents update their opinions when meeting some agents, \cite{parandeh} investigates convergence to the average of the initial opinions and characterizes the amount of divergence from the average due to such forceful agents. As reported in \cite{parandeh}, it is significantly more difficult to analyze social networks with several forceful agents that do not change their opinions and requires a different mathematical approach. Our model is closely related to the non-Bayesian framework, this keeps the computations tractable and can characterize the equilibrium in presence of agents that are biased towards their initial opinions (the so-called partially stubborn agents in our paper) or do not change their opinions at all (the so-called fully stubborn agents in our paper). The recent work~\cite{yildiz} studies opinion dynamics based on the so-called voter model where each agent holds a binary $0$-$1$ opinion and at each time a randomly chosen agent adopts the opinion of one of his neighbors, and there are also stubborn agents that do not change their states. Under such model, \cite{yildiz} shows that the opinions converge in distribution and characterizes the first and the second moments of this distribution. In addition, our paper is also related to consensus problems in which the question of interest is whether beliefs (some scalar numbers) held by different agents will converge to a common value, e.g., \cite{tsitsikilis2, tsitsikilis3, tsitsikilis, jadbabaie, fagnani}.
\subsection{Contributions}
Our first set of results, Lemmas \ref{lemma1} and \ref{conv1}, are rather straightforward characterizations of equilibrium and convergence time when there are no stubborn agents. These results are analogs of well-known convergence results for the probability distribution over the states of a Markov chain. In this case, our model reduces to a \emph{continuous coordination game} and the (noisy) best-response dynamics converge to a common opinion in which the impact of each agent is directly proportional to his degree in the social network. The analysis of the convergence speed, in such a continuous coordination game, reveals conclusions that are different from those in \cite{ellison}, \cite{saberi} in the context of a two-strategy coordination game. The convergence of best-response dynamics in highly connected non-local graphs, like the complete graph, occurs at a faster speed than the convergence in poorly connected local graphs like the ring graph.

Our second set of results are concerned with social networks in which some of the agents are fully/partially stubborn. In this case, the best-response dynamics at each agent converges to a convex combination of the initial opinions of the stubborn agents (Lemma \ref{lemma: convex}). The impact of each stubborn agent on such an equilibrium is related to appropriately defined hitting probabilities over a modified graph $\hat{\mG}(\hat{\mV}, \hat{\mE})$ of the original social network ${\mG}(\mV, \mE)$. We also give an interesting electrical network interpretation of the equilibrium (Lemma \ref{elec}). Since the exact characterization of convergence time is difficult, we derive appropriate upper-bounds (Lemma \ref{diaconis}, Lemma \ref{sinclaire}) and a lower-bound (Lemma \ref{lemma: lower}) on the convergence time that depend on the structure of the social network (such as the diameter of the graph and the relative degrees of stubborn and non-stubborn agents), and the location of stubborn agents and their levels of stubbornness. Based on such bounds, we study the convergence speed in social networks with different topologies such as expander graphs, Erdos-Renyi random graphs, and small-world networks.

Finally, in concluding remarks, we discuss the implication of our results in applications where limited advertising budget is to be used to convince a limited number of agents in social networks to adopt, for example, a certain opinion about a product/topic. Such agents may in turn convince others to change their opinions. Our results shed some light on the optimal selection of such agents to trigger a faster spread of the advertised opinion throughout the social network.
\subsection{Organization}
The organization of the paper is as follows. We start with the definitions and introduce our model in Section \ref{model}. In Section \ref{no stubborn}, we investigate opinion dynamics in social networks without any stubborn agents. We consider social networks with at least one stubborn agent in Section \ref{with stubborn}. Section \ref{conclusions} contains our concluding remarks. The proofs of the results are provided in the appendices at the end of the paper.
\subsection{Basic notations}
All the vectors are column vectors. $x^T$ denotes the transpose of vector $x$. A diagonal matrix with elements of vector $x$ as diagonal entries is denoted by $\mathrm{diag}(x)$. $x_{max}$ means the maximum element of vector $x$. Similarly, $x_{min}$ is the minimum element of vector $x$. $\mathds{1}_n$ denotes a vector of all ones of size $n$.
\section{Model and Definitions}\label{model}
Consider a social network with $n$ agents, denoted by a graph $\mG(\mV,\mE)$ where agents are the vertices and edges indicate the pairs of agents that have interactions. For each agent $i$, define its neighborhood $\partial_i$ as the set of agents that node $i$ interacts with, i.e., $\partial_i:=\{j: (i,j)\in \mE\}$. Each agent $i$ has an initial opinion $x_i(0)$. For simplicity we assume that the initial opinions are some numbers between $0$ and $1$. For example, $x_i(0)$'s could represent the people opinions about the economic situation of the country, ranging from $0$ to $1$ with an opinion $1$ corresponding to perfect satisfaction with the current economy and $0$ representing an extremely negative view towards the economy. Let $x(0):=[x_1(0) \cdots x_n(0)]^T$ denote the vector of initial opinions. We assume each agent $i$ has a cost function of the form
\be
J_i(x_i, x_{\partial_i})=\frac{1}{2}\sum_{j \in \partial_i}(x_i-x_j)^2+\frac{1}{2}K_i(x_i-x_i(0))^2,
\ee
that he tries to minimize where $K_i\geq 0$ measures the \textit{stubbornness} of agent $i$ regarding his initial opinion\footnote{Although we have considered uniform weights for the neighbors, the results in the paper hold under a more general setting when each agent puts a weight $w_{ij}$ for his neighbor $j$.}.
When none of the agents are stubborn, correspondingly $K_i$'s are all zero, the above formulation defines a \textit{coordination game} with continuous payoffs because any vector of opinions $x=[x_1 \cdots x_n]^T$ with $x_1=x_2=\cdots=x_n$ is a \textit{Nash equilibrium}. Here, we consider a synchronous version of the game between the agents. At each time, every agent observes the opinions of his neighbors and updates his opinion based on these observations and also his own initial opinion in order to minimize his cost function. It is easy to check that, for every agent $i$, the best-response strategy is
\be \label{individual dynamics}
x_i(t+1)=\frac{1}{d_i+K_i}\sum_{j \in \partial_i}x_j(t)+\frac{K_i}{d_i+K_i}x_i(0),
\ee
where $d_i=|\partial_i|$ is the degree of node $i$ in graph $\mG$. Define a matrix $A_{n \times n}$ such that $A_{ij}=\frac{1}{d_i+K_i}$ for $(i,j) \in \mE$ and zero otherwise. Also define a diagonal matrix $B_{n \times n}$ with $B_{ii}=\frac{K_i}{d_i+K_i}$ for $1 \leq i \leq n$. Thus, in the matrix form, the best response dynamics are given by
\be \label{matrix dynamics}
x(t+1)=Ax(t)+Bx(0).
\ee
Iterating (\ref{matrix dynamics}) shows that the vector of opinions at each time $t\geq 0$ is
\be \label{solution}
x(t)=A^tx(0)+\sum_{s=0}^{t-1}A^{s}Bx(0).
\ee
In the rest of the paper, we investigate the existence of equilibrium under the dynamics (\ref{matrix dynamics}) in different social networks, with or without stubborn agents. We also characterize the convergence time of the dynamics, i.e., the amount of time that it takes for the agents' opinions to get close to the equilibrium. The equilibrium behavior is relevant only if the convergence time is reasonable \cite{ellison}.
\section{No Stubborn Agents}\label{no stubborn}
Convergence issues in the case of no stubborn agents is a special case of consensus, and has been well studied. Here, we briefly review this work to put our later results in context. This also allows us to compare the results in \cite{ellison} to continuous opinion dynamics.

When there are no stubborn agents in the social network, i.e., all $K_i$'s are zero, $A$ is a row-stochastic matrix and $B=0$ in (\ref{matrix dynamics}). Without loss of generality, assume that $A$ is irreducible which corresponds to $\mG$ being a connected graph. (Otherwise, $A$ can be simply viewed as a collection of diagonal sub-matrices with each sub-matrix describing the opinion dynamics on one of the disconnected subgraphs of $\mG$.). For now, further assume that $A$ is primitive, i.e., there is a constant $t_0$ such that all elements of $A^t$ are strictly positive for all $t \geq t_0$. It is easy to show that $A$ is primitive if and only if the graph $\mG$ is not bipartite, i.e., there is both an odd cycle and an even cycle from every node to itself (we will later discuss the case of bipartite graphs).
\subsection{Existence and characterization of the equilibrium}
By the Perron-Frobenius theorem, the eigenvalues of $A$ are such that $1=\lambda_1 > |\lambda_2| \geq \cdots \geq |\lambda_n|$. The right eigenvector corresponding to $\lambda_1=1$ is $ \mathds{1}_n$ up to a normalization constant.
Let $\pi$ be the left eigenvalue of $A$ which is unique up to a normalization constant.

Using the eigen-decomposition of A (or Jordan normal decomposition if eigenvalues are not distinct), it is well-known that $\lim_{t \to \infty} A^t =\mathds{1}_n\pi^T$ up to a scalar multiple. In fact, since $A^t$ is row-stochastic, $\sum_{i=1}^n \pi_i=1$. Hence, $\pi$ can be interpreted as the unique stationary distribution of a Markov chain with transition probability matrix $A$, so $A^t$ must converge to the matrix with all rows equal to $\pi^T$. Hence, the equilibrium under dynamics (\ref{matrix dynamics}) is unique and simply given by
\be
x({\infty}):=\lim _{t \to \infty}x(t) = \mathds{1}_n\pi^Tx(0).
\ee
This shows that in equilibrium, agents will reach a \textit{consensus}, i.e., all opinions are eventually the same and equal where $x_i(\infty)=\sum_{i=1}^n\pi_ix_i(0)$, for all $i \in \mV$.

Note that $A$ can be interpreted as transition probability matrix of a random walk over the graph $\mG$ with edge weights equal to one. It is easy to check that such random walk is reversible and the stationary distribution of the random walk (i.e., left eigenvector of $A$ corresponding to the eigenvalue $1$) is simply $\pi_i=d_i/2 |\mE|$, $i \in \mV$. Hence, the impact of each agent on the equilibrium consensus is directly proportional to its degree. We state the result as the following lemma.
\begin{lemma}\label{lemma1}
In a social network $\mG$ with no stubborn agents, and initial opinions vector $x(0)$, the best response dynamics will converge to the following unique equilibrium
\be
x_i(\infty)=\frac{1}{2|\mE|}\sum_{j=1}^nd_j x_j(0);\mbox{ for all }i \in \mV,
\ee
where $d_i$ is the degree of agent $i$.
\end{lemma}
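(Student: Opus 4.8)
The plan is to verify the claimed equilibrium by directly identifying the left Perron eigenvector $\pi$ of the row-stochastic matrix $A$ and then invoking the limit $\lim_{t \to \infty} x(t) = \mathds{1}_n \pi^T x(0)$ already established in the text preceding the lemma. The entire task reduces to showing that $\pi_i = d_i / (2|\mE|)$ is the correct stationary distribution, after which the formula $x_i(\infty) = \sum_{j=1}^n \pi_j x_j(0) = \frac{1}{2|\mE|}\sum_{j=1}^n d_j x_j(0)$ follows immediately.

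First I would recall that in the no-stubborn case $K_i = 0$, so the entries of $A$ are $A_{ij} = 1/d_i$ for $(i,j) \in \mE$ and zero otherwise; this is exactly the transition matrix of the simple random walk on $\mG$, where from node $i$ one steps to a uniformly chosen neighbor. To confirm that the proposed $\pi$ is stationary, I would check the reversibility (detailed balance) condition $\pi_i A_{ij} = \pi_j A_{ji}$ for every edge $(i,j) \in \mE$. Substituting the candidate values gives $\frac{d_i}{2|\mE|} \cdot \frac{1}{d_i} = \frac{1}{2|\mE|}$ on the left and $\frac{d_j}{2|\mE|} \cdot \frac{1}{d_j} = \frac{1}{2|\mE|}$ on the right, so the two sides agree for each edge. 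Detailed balance implies global balance $\pi^T A = \pi^T$, so $\pi$ is indeed a left eigenvector of $A$ for eigenvalue $1$. I would also verify normalization: $\sum_{i=1}^n \pi_i = \frac{1}{2|\mE|}\sum_{i=1}^n d_i = \frac{2|\mE|}{2|\mE|} = 1$, using the handshake identity $\sum_i d_i = 2|\mE|$.

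With $\pi$ identified and normalized, I would appeal to the uniqueness of the stationary distribution for the irreducible (and, under the primitivity assumption, aperiodic) chain, which the Perron--Frobenius discussion in the text already guarantees. Since $\pi$ is the unique left eigenvector summing to one, the limiting matrix is $\mathds{1}_n \pi^T$, and substituting componentwise into $x(\infty) = \mathds{1}_n \pi^T x(0)$ yields the stated expression, identical for every $i \in \mV$.

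I do not anticipate a serious obstacle here, as the argument is essentially a verification: the nontrivial analytic content (existence of the limit, uniqueness of $\pi$, the form $\lim_t A^t = \mathds{1}_n \pi^T$) has already been supplied before the lemma. The only point requiring mild care is the reliance on primitivity for aperiodicity; for bipartite graphs $A$ has the eigenvalue $-1$ and $A^t$ need not converge, but the text explicitly defers that case, so I would restrict attention to the non-bipartite setting and simply note that the stationary-distribution computation itself (detailed balance and normalization) holds regardless of periodicity.
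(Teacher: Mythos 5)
Your proposal is correct and follows essentially the same route as the paper: the text preceding the lemma likewise identifies $A$ as the transition matrix of the simple random walk on $\mG$, uses reversibility to obtain $\pi_i = d_i/(2|\mE|)$, and invokes $\lim_{t\to\infty} A^t = \mathds{1}_n\pi^T$ from the Perron--Frobenius discussion. Your explicit verification of detailed balance and the normalization via the handshake identity merely spells out steps the paper treats as immediate.
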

\subsection{Convergence time}
First, we introduce a convenient norm on $\mathbb{R}^n$ that is linked to the stationary distribution $\pi$ of the random walk on the social network graph. Let $\ell^2(\pi)$ be the real vector space $\mathbb{R}^n$ endowed with the scalar product
$$
\langle z,y\rangle_{\pi}:=\sum_{i=1}^r z(i)y(i)\pi(i).
$$
Then, the norm of $z$ with respect to $\pi$ is defined as
$$
\|z\|_{\pi}:=\left(\sum_{i=1}^r z(i)^2\pi(i)\right)^{1/2}.
$$

Define the error as the vector
\be
e(t):=x(t)-x(\infty).
\ee
The following lemma states that the error goes to zero geometrically at a rate equal to the second largest eigenvalue modulus of $A$. 
\begin{lemma}\label{conv1}
Under the best-response dynamics,
\be
\|e(t)\|_{\pi}\leq \rho_2^t \|e(0)\|_{\pi}.
\ee
where $\rho_2:=\max_{i\neq 1} {|\lambda_i|}$ is the SLEM (Second Largest Eigenvalue Modulus) of $A$.
\footnote{
In Euclidian norm, $\sqrt{\pi_{min}}\|e(t)\|_2 \leq  \|e(t)\|_{\pi}\leq \sqrt{\pi_{max}}\|e(t)\|_2$, and so
$
\|e(t)\|_2\leq \rho_2^t \sqrt{\frac{d_{max}}{d_{min}}}\|e(0)\|_2,
$
where $d_{max}=\max _id_i$ and $d_{min}=\min_i d_i$.
}
\end{lemma}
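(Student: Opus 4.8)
The plan is to exploit the reversibility of the random walk with transition matrix $A$ in order to diagonalize $A$ with respect to the inner product $\langle\cdot,\cdot\rangle_{\pi}$, and then to read off the geometric decay directly from the spectral expansion of the error. Since there are no stubborn agents, $B=0$ and the dynamics (\ref{matrix dynamics}) reduce to $x(t+1)=Ax(t)$, so that $x(t)=A^tx(0)$ and, by Lemma \ref{lemma1}, $x(\infty)=\mathds{1}_n\pi^Tx(0)$.

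First I would record that $A$ is self-adjoint on $\ell^2(\pi)$. Because the walk is reversible, the detailed-balance relation $\pi_iA_{ij}=\pi_jA_{ji}$ holds for all $i,j$, and a short computation exchanging the order of summation gives $\langle Az,y\rangle_{\pi}=\langle z,Ay\rangle_{\pi}$ for all $z,y\in\mathbb{R}^n$. Self-adjointness is the key structural fact: it guarantees that $A$ has real eigenvalues and, crucially, an orthonormal basis $v_1,\dots,v_n$ of eigenvectors with respect to $\langle\cdot,\cdot\rangle_{\pi}$, satisfying $Av_k=\lambda_k v_k$. This remains valid even when some eigenvalues are repeated, so no appeal to the Jordan form is needed. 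I would normalize so that $v_1=\mathds{1}_n$, which is consistent since $\|\mathds{1}_n\|_{\pi}^2=\sum_i\pi_i=1$.

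Next I would expand the initial condition in this basis. Writing $x(0)=\sum_{k=1}^n c_k v_k$ with $c_k=\langle x(0),v_k\rangle_{\pi}$, the leading coefficient is $c_1=\langle x(0),\mathds{1}_n\rangle_{\pi}=\sum_i\pi_i x_i(0)=\pi^Tx(0)$, so that $x(\infty)=c_1\mathds{1}_n=c_1v_1$. Applying $A^t$ mode by mode gives $x(t)=\sum_{k=1}^n c_k\lambda_k^t v_k$, and subtracting the equilibrium cancels the $k=1$ term, leaving $e(t)=\sum_{k=2}^n c_k\lambda_k^t v_k$; in particular $e(0)=\sum_{k=2}^n c_k v_k$.

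Finally I would invoke orthonormality (Parseval) to compute $\|e(t)\|_{\pi}^2=\sum_{k=2}^n c_k^2\lambda_k^{2t}$. Since $|\lambda_k|\le\rho_2$ for every $k\ge 2$, each factor obeys $\lambda_k^{2t}\le\rho_2^{2t}$, whence $\|e(t)\|_{\pi}^2\le\rho_2^{2t}\sum_{k=2}^n c_k^2=\rho_2^{2t}\|e(0)\|_{\pi}^2$, and taking square roots yields the claimed bound. I expect the only genuine obstacle to be the verification of self-adjointness from reversibility; once $A$ is symmetric in $\ell^2(\pi)$, the spectral decomposition and the Parseval estimate are routine. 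The footnote's Euclidean version then follows by comparing $\|\cdot\|_{\pi}$ with $\|\cdot\|_2$ through $\pi_{\min}$ and $\pi_{\max}$, using $\pi_i=d_i/2|\mE|$ so that $\pi_{\max}=d_{\max}/2|\mE|$ and $\pi_{\min}=d_{\min}/2|\mE|$.
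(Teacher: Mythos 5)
Your proposal is correct and follows essentially the same route as the paper: both rest on the fact that reversibility makes $A$ self-adjoint on $\ell^2(\pi)$, yielding an orthonormal eigenbasis in which the error has no component along $\mathds{1}_n$, so the Parseval identity bounds the decay by $\rho_2^t$. The only cosmetic difference is that the paper iterates a one-step contraction via the operator $A-\mathds{1}_n\pi^T$ while you expand $e(0)$ once and apply $A^t$ directly; the substance is identical.
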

See Appendix \ref{proof-lemma-conv1} for the proof. We define the convergence time $\tau(\nu)$ as
\be
\tau(\nu)=\inf\{t \geq 0: \|e(t)\|_{\pi} \leq \nu\},
\ee
where $0 < \nu \ll 1 $ is some small positive number.

It can be seen that under the $\pi$-norm, $\|e(t)\|_{\pi}=\mathrm{Var}_{\pi}(e(t))$, i.e., the variance of $e(t)$ with respect to distribution $\pi$. This ensures that $\|e(0)\|_{\pi} \leq 1$ even when $n \to \infty$ because we assumed that initial opinions are bounded between zero and one. A simple calculation, based on Lemma \ref{conv1}, reveals that
\ben
\left(\frac{1}{1-\rho_2}-1\right)\log\left(\frac{\|e(0)\|_{\pi}}{\nu}\right) \leq \tau(\nu) \leq \frac{1}{1-\rho_2}\log\left( \frac{\|e(0)\|_{\pi}}{\nu}\right).
\een
In particular, the convergence time is $\Theta\Big(\frac{1}{1- \rho_2}\Big)$ as the number of agents $n$ grows.
\subsection{Bipartite networks and noisy opinion dynamics}
Next, we consider the case that the social network is bipartite. One important example of such networks is a ring network \textit{with an even number of agents $n$}. The ring graph is formed by placing the agents on a circle and connecting each agent to two of his nearest neighbors. In this case, the best-response dynamics do not converge to an equilibrium. For example, for the case of the ring network with $n$ even, matrix $A$ is simply
\ben
A_{ij}= \left\{\begin{array}{ll}
1/2& \mbox{if } |i-j|= 1\mbox{ or } n-1\\
0& \mbox{otherwise}.
\end{array}\right.
\een
Then, it is easy to see that, as $t \to \infty$, $A^t$ alternates between two matrices for $t$ even and $t$ odd, in fact, for $t$ odd,
\ben
\lim_{t \to \infty} A^{(t)}_{ij}= \left\{\begin{array}{ll}
2/n& \mbox{if } |i-j|  \equiv1 \mod 2,\\
0& \mbox{otherwise}
\end{array}\right.
\een
and for $t$ even,
\ben
\lim_{t \to \infty}A^{(t)}_{ij}= \left\{\begin{array}{ll}
2/n& \mbox{if } |i-j| \equiv0 \mod 2\\
0& \mbox{otherwise}.
\end{array}\right.
\een
This shows that, as $t \to \infty$, the opinion of each agent $i$ does not converge and alternates between two values. The opinion of agent $i$ at an odd $t$ will be the average of the initial opinions of the agents $\{j: |i-j| \equiv 1 \mod 2\}$ and at even $t$, it will be the average of the initial opinions of the agents $\{j: |i-j| \equiv 0 \mod 2\}$.

In practice, not everyone completely ignores his own previous opinion and might be slightly biased by his old opinion. Hence, we can consider a noisy version of the best-response dynamics as follows
 \be
x_i^{(\epsilon)}(t+1)=(1-\epsilon)\Big(\frac{1}{d_i}\sum_{j \in \partial_i}x_j(t)\Big)+\epsilon x_i(t),
\ee
for some \textit{self-confidence} $\epsilon>0$. Here, we assume all agents have the same self-confidence but the argument can be adapted for different self-confidences as well. Introducing such self-confidences, adds self-loops to graph $\mG$ which ensures that $\mG$ is not bipartite, or, correspondingly, $A^{(\epsilon)}$ is primitive where $A_{ij}^{(\epsilon)}=\epsilon$ if $j=i$ and $A_{ij}^{(\epsilon)}=(1-\epsilon)/d_i$ if $j \in \partial_i$. Hence, using the results in the previous section, the noisy best-response dynamics will converge to
\ben
x_i^{(\epsilon)}(\infty)=\sum_{j=1}^n\pi_j^{(\epsilon)}x_j(0); \mbox{ for all }i \in \mV,
\een
where $\pi^{(\epsilon)}=[\pi_1^{(\epsilon)} \cdots \pi_n^{(\epsilon)} ]^T$ is the unique stationary distribution of the Markov chain\footnote{The terminologies random walk and Markov chain can be used interchangeably here.} with transition probability matrix $A^{(\epsilon)}$. By reversibility of $A^{(\epsilon)}$, $\pi_i^{(\epsilon)}=\pi_i=\frac{d_i}{2|\mE|}$ independently of $\epsilon$. Hence,
$$x_i^{(\epsilon)}(\infty)= x_i(\infty)=\frac{1}{2|\mE|}\sum_{j=1}^nd_jx_j(0), \mbox{ for all i }\in \mV,$$ i.e., agents will converge to the same equilibrium as in the non-partite case.

To investigate the convergence rate, note that $A^{(\epsilon)}=\epsilon \mI+(1-\epsilon)A$, so $\lambda_i^{(\epsilon)}=\epsilon+(1-\epsilon)\lambda_i(A)$. Especially, $\lambda_2^{(\epsilon)}=\epsilon+(1-\epsilon)\lambda_2(A)$ and $\lambda_n^{(\epsilon)}=-1+2\epsilon$ because $\lambda_n(A)=-1$ for random walk on bipartite graph. Hence, as far as the scaling law with $n$ concerns, the convergence time is $\Theta\Big(\frac{1}{1-\rho_2^{(\epsilon)}}\Big)$ for $\rho_2^{(\epsilon)}=\max\{\lambda_2^{(\epsilon)}, |\lambda_n^{(\epsilon)}|\}$, which, for fixed $\epsilon$, means that the convergence time is  $\Theta\Big(\frac{1}{1- \lambda_2(A)}\Big)$.
Similarly, we can consider the noisy best-response dynamics for the non-partite graphs. Again, introducing self-confidence in opinion dynamics does not change the equilibrium. Moreover, the second largest and the smallest eigenvalues will be respectively given by $\lambda_2^{(\epsilon)}=\epsilon+(1-\epsilon)\lambda_2(A)$ and $\lambda_n^{(\epsilon)}> -1+2\epsilon$. Hence, the convergence time is again of the order $\Theta\Big(\frac{1}{1- \lambda_2(A)}\Big)$ as $n$ grows.

Hence, the convergence time of the best-response dynamics is determined by $\lambda_2(A)$ of the corresponding random walk over the social graph.
\begin{example}
In this example, we make a comparison between ring graph and complete graph with $n$ nodes. The complete graph represents situation when all agents can communicate with each other with no constrains, while in the ring graph, each agent can only communicate with his two nearest neighbors. Qualitatively, the complete graph and the ring graph represent two extreme ends of the spectrum of graphs~\cite{shah}. Note in both cases the (noisy) best-response dynamics converge to the average of initial opinions. It is easy to see that $\lambda_2(A)$ is $\frac{1}{n-1}$ for the complete graph and $\cos(\frac{2\pi}{n})\approx 1-\frac{\pi}{n^2}$ for the ring. Hence, while both of the graphs have the same equilibrium, convergence in the complete graph is much faster than convergence in the ring, in fact, $O(1)$ vs. $O(n^2)$.
\end{example}
It is interesting to compare our continuous coordination game with a two-strategies coordination game in Ellison \cite{ellison} where agents behave according to a noisy best-response dynamics. Introducing the noise eliminates the possibility of multiple equilibria and leads to convergence to the ``risk dominant`` strategy throughout the network. He studied the rate of convergence for the complete graph and the ring graph and showed that the dynamics converges very slowly in the complete graph and very fast in the ring network. This is in total contrast with the above example as we observe that, in a continuous coordination game, the best-response dynamics converge faster in the complete graph compared to the ring graph. This can be justified by noticing that the mechanism for spreading a common strategy throughout the network is inherently different in the continuous and the two-strategy coordination game. In the two-strategy coordination game, existence of a sufficiently large cluster of agents playing the risk-dominant strategy needed in order to prevail the risk-dominant strategy throughout the network. As Elision stated, in the ring network, the required size of such clusters is very smaller than the required size of clusters for the complete graph. Hence, it takes an extremely long time to see such clusters in the complete graph starting from arbitrary initial conditions. In the continuous coordination game, the (noisy) best-response dynamics converge to the average of initial opinions in both graphs. In the complete graph, after the first iteration, the opinion of each node is the average of the initial opinions of all agents excluding its own opinion which has little importance especially when $n$ is large. In the ring graph, averaging the initial opinions of an agents's neighbors could still be very far from the average of all the initial opinions and increasing $n$ will make this gap even larger.

There is a rich literature on approximating $\lambda_2(A)$ of a random walk over different types of graphs, e.g., see Chapter 2 of \cite{shah} for a survey. Intuitively, the convergence time is dominated by the highly connected component of the graph which is loosely connected to the rest of the network (captured by the notion of \textit{conductance} of the Markov chain \cite{sin}, or the edge isoperimetric function of the graph). For example, expander graphs have fast convergence (with a convergence time independent of $n$) because the number of connections from every subgraph, of size less than $n/2$, to the rest of the network is at least a constant fraction of total connections within the subgraph independent of $n$. We do not proceed in this direction further and in the next section, we study the more interesting case of social networks with stubborn agents.
\section{Impact of Stubborn Agents}\label{with stubborn}
\subsection{Existence and characterization of equilibrium}\label{sec: stubborn-existence}
Consider a connected social network $\mG(\mV,\mE)$ in which at least one of the agents is stubborn, i. e., $K_i > 0$ for some $i \in \mV$. Then $A$ is an irreducible sub-stochastic matrix with the row-sum of at least one row less than one. Let $\rho_1(A):=\max_i |\lambda_i(A)|$ denote the spectral radius of $A$. It is well-known that $\rho_1(A)$ of a sub-stochastic matrix $A$ is less than one, and hence, $\lim_{t \to \infty} A^t=0$. Therefore, by Perron-Ferobenius theorem, the largest eigenvalue should be positive, real $1> \lambda_1 > 0$ and $\rho_1(A)=\lambda_1$. Hence, in this case, based on (\ref{solution}), the equilibrium exists and is equal to
\be \label{equi2}
x({\infty}):=\lim _{t \to \infty}x(t) =\sum_{s=0}^\infty A^s Bx(0)=(I-A)^{-1}Bx(0).
\ee
Therefore, since $B_{ii}=0$ for all non-stubborn agents $i$, the initial opinions of non-stubborn agents will vanish eventually and have no effect on the equilibrium (\ref{equi2}).

The matrix form \dref{equi2} does not give any insight on how the equilibrium depends on the graph structure and the stubborn agents. Next, we describe the equilibrium in terms of explicit quantities that depend on the graph structure, location of stubborn agents and their levels of stubbornness.

Let $\mS \subseteq \mV$ be the set of stubborn agents and $|\mS|\geq 1$. Any agent $i$ in $\mS$ is either \textit{fully stubborn}, meaning its corresponding $K_i = \infty$, or it is \textit{partially stubborn}, meaning $0< K_i < \infty$. Hence, $\mS=\mS_F \cup \mS_P$ where $\mS_F$ is the set of fully stubborn agents and $\mS_P$ is the set of partially stubborn agents\footnote{We need to distinguish between the case $0 <K_i < \infty$ and $K_i=\infty$ for technical reasons; however, as it will become clear later, the conclusions for $K_i=\infty$ are equivalent to those for $K_i < \infty$ if we let $K_i \to \infty$}.
Without loss of generality, index the partially stubborn agents with $1, \cdots, |\mS_P|$, index the fully stubborn agents with $|\mS_P|+1, \cdots, |\mS|$ and finally the non-stubborn agents with $|\mS|+1, \cdots, n$.
Next, we construct a \textit{weighted graph} $\hat{\mG}(\hat{\mV}, \hat{\mE})$ based on the original social graph $\mG(\mV, \mE)$ and the location of partially stubborn agents $\mS_P$ and their levels of stubbornness $K_i$, $i \in \mS_P$.

Assign weight $1$ to all the edges of $\mG$. Connect a new vertex $u_i$ to each $i \in \mS_P $ and assign a weight $K_i$ to the corresponding edge. Index the new vertex connected to $i \in \mS_P$ by $n+i$. We use the node $u_i$ and its index $n+i$ interchangeably. Let $\hat{\mV}:=\mV \cup \{u_i:i \in \mS_P \}=\{1, 2, \cdots, n+|\mS_P|\}$ and $\hat{\mE}:=\mE \cup \{(i,u_i): i \in \mS_P \}$. Also let $w_{ij}$ denote the weight of edge $(i,j)\in \hat{\mE}$. Then $\hat{\mG}(\hat{\mV}, \hat{\mE})$ is a weighted graph with weights $w_{ij}=1$ for all $(i,j) \in \mE$ (the edges of $\mG$) and $w_{iu_i}=K_i$ for all $ i \in \mS_P$. Let $u(\mS_P)= \{u_i: i \in \mS_P\}$.

Define $w_i:=\sum_{j: (i,j) \in \hat{\mE}}w_{ij}$ as the \textit{weighted degree} of vertex $i \in \hat{\mV}$. It should be clear that
\be \label{weight}
w_i= \left\{\begin{array}{ll}
{d_i+K_i}&\mbox{ for }  i \in \mS_P,\\
{d_i}&\mbox{ for } i \in \mV \backslash \mS_P,\\
{K_j}& \mbox{ for }i=u_j,  j \in \mS_P.
\end{array} \right.
\ee
Consider the random walk $Y(t)$ over $\hat{\mG}$ where the probability of transition from vertex $i$ to vertex $j$ is $P_{ij} =\frac{w_{ij}}{w_i}$. Assume the walk starts from some initial vertex $Y(0)=i \in \mV$. For any $j\in \hat{\mV}$ define
\be
\tau_j:=\inf \{t \ge 0: Y(t)=j\},
\ee
as the first hitting time to vertex $j$. Also define $\tau:=\bigwedge_{j \in \mS_F \cup u(\mS_P)} \tau_{j}$ as the first time that the random walk hits any of the vertices in $\mS_F \cup u(\mS_P)$. The following Lemma characterizes the equilibrium. The proof is provided in Appendix \ref{proof-lemma-convex}. 
\begin{lemma}\label{lemma: convex}
The best-response dynamics converge to a unique equilibrium where the opinion of each agent is a convex combination of the initial opinions of the stubborn agents. Based on the random walk over the graph $\hat{\mG}$,
\be \label{equi3}
x_i(\infty)=\sum_{j \in \mS_P} \mathbb{P}_i(\tau=\tau_{u_j}) x_j(0)+ \sum_{j \in \mS_F} \mathbb{P}_i(\tau=\tau_{j}) x_j(0) ; \mbox{ for all } 1 \leq i \leq n,
\ee
where $\mathbb{P}_i(\tau=\tau_{k})$, $ k \in \mS_F \cup u(\mS_P)$, is the probability that the random walk hits vertex $k$ first, among vertices in $\mS_F \cup u(\mS_P)$, given the random walk starts from vertex $i$.
\end{lemma}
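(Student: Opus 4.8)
The plan is to recognize the equilibrium $x(\infty)$ from \dref{equi2} as a \emph{harmonic function} (in the random-walk sense) on the augmented graph $\hat{\mG}$, with boundary values pinned at the stubborn vertices, and then to read off \dref{equi3} from the optional-stopping / Dirichlet-solution representation of such a harmonic function.

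First I would write down the fixed-point equations satisfied by the equilibrium. Since $x(\infty)=Ax(\infty)+Bx(0)$, separating rows according to agent type gives: for a non-stubborn agent $i$, $x_i(\infty)=\frac{1}{d_i}\sum_{j\in\partial_i}x_j(\infty)$; for a partially stubborn agent $i\in\mS_P$, $x_i(\infty)=\frac{1}{d_i+K_i}\sum_{j\in\partial_i}x_j(\infty)+\frac{K_i}{d_i+K_i}x_i(0)$; and for a fully stubborn agent $i\in\mS_F$ (in the limit $K_i\to\infty$), $x_i(\infty)=x_i(0)$. Now extend $x(\infty)$ to a function $g$ on all of $\hat{\mV}$ by keeping $g_i=x_i(\infty)$ on the non-stubborn and partially stubborn vertices and imposing the boundary values $g_{u_j}=x_j(0)$ for $j\in\mS_P$ and $g_k=x_k(0)$ for $k\in\mS_F$. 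Reading the transition probabilities $P_{ij}=w_{ij}/w_i$ off the weighted degrees \dref{weight}, the three displayed equations say exactly that $g$ is harmonic at every non-absorbing vertex, $g_i=\sum_j P_{ij}g_j$: the edge from a partially stubborn $i$ to its auxiliary vertex $u_i$ carries weight $K_i$, so the transition to $u_i$ reproduces precisely the coefficient $\frac{K_i}{d_i+K_i}$ multiplying $x_i(0)$.

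With harmonicity established, I would invoke optional stopping. Because $g$ is harmonic on the transient set $\hat{\mV}\setminus(\mS_F\cup u(\mS_P))$, the process $g(Y(t\wedge\tau))$ is a bounded $\mathbb{P}_i$-martingale; and because $\hat{\mG}$ is finite, connected, and contains at least one absorbing vertex, the walk is absorbed in finite time, $\tau<\infty$ almost surely. Optional stopping then gives $x_i(\infty)=g_i=\mathbb{E}_i\big[g(Y(\tau))\big]$. Since $Y(\tau)$ lands on an absorbing vertex and $\{Y(\tau)=k\}=\{\tau=\tau_k\}$, evaluating $g$ at the absorbing sites yields $x_i(\infty)=\sum_{j\in\mS_P}\mathbb{P}_i(\tau=\tau_{u_j})x_j(0)+\sum_{j\in\mS_F}\mathbb{P}_i(\tau=\tau_j)x_j(0)$, which is \dref{equi3}. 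The convex-combination claim is then automatic, since the coefficients are nonnegative and sum to $\mathbb{P}_i(\tau<\infty)=1$.

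The step that needs the most care is the fully stubborn case $K_i=\infty$, for which the literal matrices $A,B$ in \dref{matrix dynamics} degenerate (the row of $A$ vanishes and $B_{ii}=1$). I would justify $x_i(\infty)=x_i(0)$ either as the $K_i\to\infty$ limit of the partially stubborn identity --- in which limit the auxiliary vertex $u_i$ absorbs with probability one the instant the walk reaches $i$, so that $i$ may be collapsed into an absorbing vertex carrying value $x_i(0)$ --- or, if one prefers to avoid the martingale machinery, by a direct verification-and-uniqueness argument: one checks that the right-hand side of \dref{equi3} (viewed as a function of $i$) satisfies the same fixed-point equations, and then appeals to the invertibility of $I-A$ (equivalently, the maximum principle on the connected graph $\hat{\mG}$) to conclude that the harmonic extension with the prescribed boundary data is unique. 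Either route reduces the lemma to the standard identification of the Dirichlet solution with hitting probabilities; the only genuinely model-specific content is the bookkeeping of the weights in \dref{weight} that makes the coefficient matching exact.
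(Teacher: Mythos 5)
Your proof is correct, but it follows a genuinely different route from the paper's. The paper's argument is purely linear-algebraic: it deletes the rows and columns of the fully stubborn agents to form $\tilde{A}$ and $\tilde{B}$, writes the equilibrium as $\tilde{x}(\infty)=(\mI-\tilde{A})^{-1}\tilde{B}x_{\mS}(0)$, and then derives a first-transition recursion $F=\tilde{B}+\tilde{A}F$ for the matrix $F$ of hitting probabilities, so that $F=(\mI-\tilde{A})^{-1}\tilde{B}$ as well and the identification is immediate. You instead recognize the fixed-point equations as a discrete Dirichlet problem on $\hat{\mG}$ and solve it probabilistically via optional stopping of the bounded martingale $g(Y(t\wedge\tau))$. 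The two arguments are dual: the paper's recursion for the columns of $F$ is exactly the statement that $i\mapsto\mathbb{P}_i(\tau=\tau_k)$ is harmonic off the boundary, which is the property you exploit in the other direction. What your route buys is a transparent convexity claim (the coefficients sum to $\mathbb{P}_i(\tau<\infty)=1$, a point the paper leaves implicit in the fact that the events $\{\tau=\tau_k\}$ partition the sample space) and a clean uniqueness statement via the maximum principle rather than invertibility of $\mI-\tilde{A}$. Your handling of the fully stubborn agents agrees with the paper's observation that $x_i(t)=x_i(0)$ for all $t$ when $i\in\mS_F$. One bookkeeping remark: in the chain \dref{P matrix} the vertices $u_i$ are not absorbing (the walk returns from $u_i$ to $i$ with probability one) and the rows of $\hat{A}$ at $\mS_F$ are genuine random-walk rows; since everything is stopped at $\tau$ this is immaterial, but your ``absorbing vertex'' picture is a (harmless) modification of the walk as the paper defines it.
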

Note that $\lim _{K_i \to \infty} \mathbb{P}_i(\tau=\tau_{u_i})=1$ for any partially stubborn agent $ i \in \mS_P$. This intuitively makes sense because as an agent $i$ becomes more stubborn, his opinion will get closer to his own opinion and behaves similarly to a fully stubborn agent.

It should be clear that when there is only one stubborn agent or there are multiple stubborn agents with identical initial opinions, eventually the opinion of every agent will converge to the same opinion as the initial opinion of the stubborn agents.

In general, to characterize the equilibrium, one needs to find probabilities $\mathbb{P}_i(\tau=\tau_{k})$, $ k \in \mS_F \cup u(\mS_P)$. Such hitting probabilities have an interesting electrical network interpretation (see Chapter 3 of \cite{aldous}) as follows. Let $\hat{\mG}$ be an electrical network where each edge $(i,j)\in \hat{\mE}$ has a conductance $w_{ij}$ (or resistance $1/w_{ij}$). Then $\mathbb{P}_i(\tau=\tau_{k})$ is the voltage of node $i$ in the electrical network where node $ k \in \mS_F \cup u(\mS_P)$ is a fixed voltage source of $1$ volt and nodes $ \mS_F \cup u(\mS_P) \backslash \{k\}$ are grounded (zero voltage). This determines the contribution of the voltage source $k$ where all the other sources are turned off. Now let vertices $\mS_F \cup u(\mS_P)$ be fixed voltage sources where the voltage of each source $i \in \mS_F$ is $x_i(0)$ volts and the voltage of each source $u_j \in  u(\mS_P)$, $j \in \mS_P$, is $x_j(0)$ volts. By the linearity of the electrical networks (the superposition theorem in circuit analysis), the voltage of each node in such an electrical network equals to the sum of the responses caused by each voltage source acting alone, while all other voltage sources are grounded. Therefore, the opinion of agent $i$, at equilibrium (\ref{equi3}), is just the voltage of node $i$ in the electrical network model. We mention the result as the following lemma and will prove it directly in Appendix \ref{proof-lemma-elec}.
\begin{lemma}\label{elec}
Consider $\mG$ as an electrical network where the conductance of each edge is $1$ and each stubborn agent $i$ is a voltage source of $x_i(0)$ volts with an internal conductance $K_i$. Fully stubborn agents are ideal voltage sources with infinite internal conductance (zero internal resistance). Then, under the best-response dynamics, the opinion of each agent at equilibrium is just its voltage in the electrical network.
\end{lemma}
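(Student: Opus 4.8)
The plan is to verify directly that the vector of node potentials in the described electrical network satisfies the same fixed-point relation as the equilibrium opinion vector, and then to conclude by uniqueness. Recall from (\ref{equi2}) that $x(\infty)$ is the unique solution of $x(\infty)=Ax(\infty)+Bx(0)$, equivalently $(I-A)x(\infty)=Bx(0)$, uniqueness being guaranteed because $\rho_1(A)<1$ makes $I-A$ invertible. Reading this relation row by row recovers exactly the individual best-response fixed points: for a non-stubborn agent $i$ it gives $x_i(\infty)=\frac{1}{d_i}\sum_{j\in\partial_i}x_j(\infty)$, and for a partially stubborn agent $i$ it gives $x_i(\infty)=\frac{1}{d_i+K_i}\sum_{j\in\partial_i}x_j(\infty)+\frac{K_i}{d_i+K_i}x_i(0)$, since $A_{ij}=\frac{1}{d_i+K_i}$ on edges and $B_{ii}=\frac{K_i}{d_i+K_i}$.

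First I would set up the network precisely: every edge $(i,j)\in\mE$ carries conductance $1$, and each stubborn node $i$ is attached to a terminal held at the fixed potential $x_i(0)$ through internal conductance $K_i$ (equivalently, the auxiliary vertex $u_i$ of $\hat{\mG}$ clamped to $x_i(0)$). Writing $v_i$ for the potential at node $i$, the key step is to impose Kirchhoff's current law at each node, namely that the net current leaving node $i$ vanishes. For a non-stubborn agent this reads $\sum_{j\in\partial_i}(v_i-v_j)=0$, and for a partially stubborn agent it reads $\sum_{j\in\partial_i}(v_i-v_j)+K_i(v_i-x_i(0))=0$. Rearranging these is term-by-term identical to the fixed-point equations above, so $v$ satisfies $(I-A)v=Bx(0)$, and since $I-A$ is invertible we get $v=x(\infty)$, which is the claim at the non-stubborn and partially stubborn nodes.

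For the fully stubborn agents I would treat the ideal voltage source as the limit $K_i\to\infty$: the internal-conductance branch dominates Kirchhoff's balance and forces $v_i=x_i(0)$, which is precisely the value $x_i(\infty)=x_i(0)$ returned by the best-response update when $K_i=\infty$. Alternatively, one clamps those nodes as Dirichlet boundary data from the outset and applies the same current-balance argument at the remaining nodes; the finite-$K_i$ computation then extends verbatim, in agreement with the footnote that the $K_i=\infty$ conclusions follow from the $K_i<\infty$ ones by letting $K_i\to\infty$.

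I expect the main obstacle to be the bookkeeping around the fully stubborn nodes, where the infinite internal conductance must be read as an ideal voltage source (a hard boundary condition) rather than a finite-conductance branch; making this limit rigorous and checking that the current-balance equations at the remaining nodes are unaffected is the only delicate point. Everything else is a direct matching of the Kirchhoff equations with $(I-A)x(\infty)=Bx(0)$, together with the invertibility of $I-A$ already established from $\rho_1(A)<1$.
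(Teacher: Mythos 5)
Your proposal is correct and follows essentially the same route as the paper: write the equilibrium as the unique solution of the node-wise fixed-point equations (the paper phrases them on the weighted graph $\hat{\mG}$, you phrase them via $(I-A)x(\infty)=Bx(0)$, which is the same system), observe that Kirchhoff's current law at each node with unit edge conductances and internal conductances $K_i$ yields exactly the same linear equations, and conclude by uniqueness, treating fully stubborn agents as clamped boundary nodes or as the $K_i\to\infty$ limit. Your explicit appeal to the invertibility of $I-A$ to justify the identification $v=x(\infty)$ is a small tidiness improvement over the paper's implicit uniqueness claim, but the argument is the same.
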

We illustrate the use of the above lemma through the following example.
\begin{example}
Consider a one-dimensional social graph, where agents are located on integers $1 \leq i \leq n$. Assume nodes $1$ and $n$ are stubborn with initial opinions $x_1(0)$ and $x_n(0)$, and stubbornness parameters $K_1>0$ and $K_n>0$. Then graph $\hat{\mG}$ consists of the original line network, where each edge weight is one, and two extra nodes $u_1$ and $u_n$ connected to $1$ and $n$ with edge weights $K_1$ and $K_n$. Using the electrical network model, the current is the same over all edges and equal to $I=(x_1(0)-x_n(0))(\frac{1}{K_1}+\frac{1}{K_n}+n-1)^{-1}$. Hence,
$x_i(\infty)=v_i=x_1(0)-I(\frac{1}{K_1}+i-1),$
for $1 \leq i \leq n$, i. e.,
$$
x_i(\infty)=\left(1-\frac{K_1^{-1}+i-1}{K_1^{-1}+K_n^{-1}+n-1}\right)x_1(0)+\left(\frac{K_1^{-1}+i-1}{K_1^{-1}+K_n^{-1}+n-1}\right)x_n(0)
$$
As $K_1$ increases, the final opinion of $i$ will get closer to stubborn agent $1$, and as $K_n$ increases, it will get closer to the opinion of agent $n$.
\end{example}
\subsection{Convergence Time}
Although we are able to characterize the equilibrium in all cases, the equilibrium makes sense only if the time needed to converge to equilibrium is reasonable. Next, we characterize convergence time in the case that there is at least one stubborn agent. Let $e(t)=x(t)-x(\infty)$ be the error vector as defined before. Trivially $e_i(t)=0$ for all fully stubborn agents $i \in \mathcal{S}_F$. Let $\tilde{e}(t):=[e_i(t): i \in \mV \backslash \mS_F]^T$ denote the errors for all other agents. The convergence to the equilibrium \dref{equi2} is geometric with a rate equal to largest eigenvalue of $A$ as stated by the following lemma whose proof is provided in Appendix \ref{proof-lemma-conv2}.
\begin{lemma}\label{conv2}
Let $\tilde{\pi}=[\frac{w_i}{Z}: i \in \mV \backslash \mS_F]^T$ for the weights $w_i$ as in \dref{weight} and $Z$ be the normalizing constant such that $\sum_{i\in \mV \setminus \mS_F} \tilde{\pi}_i=1$. Then,
\be \label{conv}
\|\tilde{e}(t)\|_{\tilde{\pi}}\leq (\lambda_A)^t \|\tilde{e}(0)\|_{\tilde{\pi}},
\ee
where $\lambda_A$ is the largest eigenvalue of $A$.
\end{lemma}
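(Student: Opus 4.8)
The plan is to reduce the claim to a statement about the operator norm of a symmetric matrix. First I would use the fixed-point identity $x(\infty)=Ax(\infty)+Bx(0)$, which follows from \dref{equi2}, to obtain a clean recursion for the error. Subtracting it from \dref{matrix dynamics} gives $e(t+1)=Ae(t)$, and therefore $e(t)=A^te(0)$. Since each fully stubborn agent has an all-zero row in $A$ and satisfies $x_i(\infty)=x_i(0)$, we have $e_i(t)=0$ for every $i\in\mS_F$ and all $t$. Consequently, in the update $e_i(t+1)=\sum_j A_{ij}e_j(t)$ for $i\in\mV\setminus\mS_F$, the terms with $j\in\mS_F$ vanish, so the non-fully-stubborn coordinates evolve autonomously as $\tilde e(t+1)=\tilde A\,\tilde e(t)$, where $\tilde A$ is the principal submatrix of $A$ indexed by $\mV\setminus\mS_F$. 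Hence $\tilde e(t)=\tilde A^t\tilde e(0)$, and it suffices to bound the $\tilde\pi$-norm of $\tilde A^t$.

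Next I would exhibit the reversibility that makes $\tilde A$ self-adjoint on $\ell^2(\tilde\pi)$. For every edge $(i,j)\in\mE$ we have $w_{ij}=1$ and $A_{ij}=1/w_i$ (using $w_i=d_i+K_i$ for $i\in\mS_P$ and $w_i=d_i$ for non-stubborn $i$), so $A_{ij}$ coincides with the transition probability $P_{ij}=w_{ij}/w_i$ of the random walk $Y(t)$ on $\hat\mG$ restricted to the original vertices. The walk on the weighted graph $\hat\mG$ is reversible with respect to the weighted degrees, i.e. $w_iP_{ij}=w_{ij}=w_{ji}=w_jP_{ji}$. Restricting to $i,j\in\mV\setminus\mS_F$ this reads $w_iA_{ij}=w_jA_{ji}$, or equivalently $\tilde\pi_iA_{ij}=\tilde\pi_jA_{ji}$ after dividing by $Z$. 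Writing $D=\mathrm{diag}(\tilde\pi)$, this detailed-balance relation is exactly the statement that $S:=D^{1/2}\tilde A D^{-1/2}$ is symmetric, since $S_{ij}=\sqrt{\tilde\pi_i/\tilde\pi_j}\,A_{ij}$ and the identity $\tilde\pi_iA_{ij}=\tilde\pi_jA_{ji}$ forces $S_{ij}=S_{ji}$.

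Finally I would convert symmetry into the desired contraction. Because $\|v\|_{\tilde\pi}=\|D^{1/2}v\|_2$, for any vector $z$ one has $\|\tilde A z\|_{\tilde\pi}=\|S\,(D^{1/2}z)\|_2\le\|S\|_2\,\|z\|_{\tilde\pi}$, and since $S$ is symmetric its spectral norm $\|S\|_2$ equals its spectral radius $\rho(S)=\max_i|\mu_i|$, where the $\mu_i$ are the (real) eigenvalues of $\tilde A$. It remains to identify this spectral radius with $\lambda_A$. Ordering the fully stubborn agents last makes $A$ block upper-triangular with diagonal blocks $\tilde A$ and $0$, so the spectrum of $A$ is that of $\tilde A$ together with the eigenvalue $0$; in particular the largest eigenvalue of $A$ equals the largest eigenvalue of $\tilde A$. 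As $\tilde A$ is nonnegative, Perron--Frobenius gives $\rho(\tilde A)=\lambda_1(\tilde A)=\lambda_A$. Therefore $\|\tilde A z\|_{\tilde\pi}\le\lambda_A\|z\|_{\tilde\pi}$, and iterating on $\tilde e(t)=\tilde A^t\tilde e(0)$ yields \dref{conv}.

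The main obstacle is not any single computation but the bookkeeping that legitimizes the reduction and the norm identity: one must verify that the fully stubborn coordinates genuinely decouple, that $A$ agrees with the reversible walk $P$ only on the edges of $\mG$ (so that the auxiliary edges to $u(\mS_P)$ and the absorbing vertices never enter the restricted submatrix), and that the self-adjoint operator norm is attained at $\lambda_A$ rather than at some negative eigenvalue of larger modulus --- which holds precisely because $\tilde A$ is nonnegative and its Perron root dominates the spectrum in absolute value.
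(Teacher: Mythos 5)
Your proposal is correct and follows essentially the same route as the paper's proof: derive $e(t+1)=Ae(t)$, restrict to the non-fully-stubborn coordinates to get $\tilde e(t+1)=\tilde A\tilde e(t)$, use reversibility of the walk on $\hat{\mG}$ (detailed balance $w_iA_{ij}=w_jA_{ji}$) to symmetrize $\tilde A$ via $D^{1/2}\tilde A D^{-1/2}$, and conclude via the spectral radius, identified with $\lambda_A$ by nonnegativity and Perron--Frobenius. The only cosmetic differences are that you obtain the error recursion from the fixed-point identity rather than the explicit series, and you invoke the operator-norm characterization of a symmetric matrix rather than expanding $\tilde e(t)$ in the orthonormal eigenbasis; your explicit remark that the Perron root dominates any negative eigenvalue in modulus is a point the paper uses only implicitly.
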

Note that in Euclidian norm, $\sqrt{\tilde{\pi}_{min}}\|\tilde{e}(t)\|_2 \leq  \|\tilde{e}(t)\|_{\tilde{\pi}}\leq \sqrt{\tilde{\pi}_{max}}\|\tilde{e}(t)\|_2$, thus,
$$
\|e(t)\|_2\leq (\lambda_A)^t \sqrt{\frac{w_{max}}{w_{min}}}\|e(0)\|_2,
$$
where $w_{max}:=\max_{i \in \mV \backslash \mS_F}w_i$ and $w_{min}:=\min_{i \in \mV \backslash \mS_F}w_i$.

Hence, the convergence is geometric with a rate at least equal to largest eigenvalue of $A$. Defining the convergence time as the familiar form
$
\tau(\nu): =\inf\{t \geq 0: \|\tilde{e}(t)\|_{\tilde{\pi}} \leq \nu\}
$
for some fixed $\nu >0$, we have
\ben
\left(\frac{1}{1-\lambda_A}-1\right)\log\left(\frac{\|\tilde{e}(0)\|_{\tilde{\pi}}}{\nu}\right) \leq \tau(\nu) \leq \frac{1}{1-\lambda_A}\log\left(\frac{\|\tilde{e}(0)\|_{\tilde{\pi}}}{\nu}\right),
\een
so again $\tau(\nu)=\Theta\left(\frac{1}{1-\lambda_A}\right)$ as $n$ grows. Let $T:= \frac{1}{1-\lambda_A}$. With a little abuse of terminology, we also call $T$ the convergence time.

Since the exact characterization of $\lambda_A$ is difficult, we will derive appropriate upper-bounds and lower-bounds for it that depend on the graph structure, the location of stubborn agents and their levels of stubbornness. The techniques used in deriving the bounds here are similar to the techniques used in deriving geometric bounds for the second largest eigenvalue of stochastic matrices \cite{diaconis, sin2}.

Consider the weighted graph $\hat{\mG}(\hat{\mV}, \hat{\mE})$ as defined in Section \ref{sec: stubborn-existence}. A path from a vertex $i$ to another vertex $j$ in $\hat{\mG}$ is a collection of \textit{oriented} edges that connect $i$ to $j$. For any vertex $i \in \mV \backslash \mS_F$, consider a path $\gamma_{i}$ from $i$ to the set $\mS_F\cup u(\mS_P)$ that does not intersect itself, i.e., $\gamma_i=\{(i,i_1), (i_1,i_2), \cdots, (i_m,j)\}$ for some $j \in \mS_F\cup u(\mS_P)$.

Proceeding along the lines of Diaconis-Stroock \cite{diaconis}, we get the following bound that yields an upper-bound on the convergence time (see Appendix \ref{proof-lemmas} for the proof).
\begin{lemma}\label{diaconis}
 Consider the weighted graph $\hat{\mG}$. Given a set of paths $\{\gamma_i: i \in \mV \backslash \mS_F\}$, from $\mV \backslash \mS_F$ to $\mS_F \cup u (\mS_P)$, let $|\gamma_i|_w:=\sum_{(s,t)\in \gamma_i} \frac{1}{w_{st}}.$ Then, the convergence time $T \leq 2 \xi$ where
 $$
 \xi:=\max_{(x,y)\in \hat{\mE}} \xi(x,y),
 $$
 and, for each edge $(x,y)\in \hat{\mE}$,
\be
\xi(x,y):=\sum_{i: \gamma_i \ni (x,y) } w_i|\gamma_i|_w.
\ee
\end{lemma}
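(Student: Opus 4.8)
The plan is to obtain the bound from a variational (Poincar\'e-type) characterization of $\lambda_A$, exactly as one bounds the spectral gap of a reversible Markov chain. First I would record that, since fully stubborn agents never update, every row of $A$ indexed by $\mS_F$ vanishes; consequently $A$ is block upper-triangular and its spectrum coincides with that of its restriction $P$ to the transient set $T:=\mV\setminus\mS_F$. Moreover, for $i,j\in T$ one checks directly that $A_{ij}=w_{ij}/w_i$ with the weighted degrees $w_i$ of (\ref{weight}), so $P$ is precisely the sub-stochastic transition matrix of the walk $Y(t)$ killed upon reaching $\mS_F\cup u(\mS_P)$, and $w_iA_{ij}=w_{ij}=w_jA_{ji}$ shows $P$ is self-adjoint on $\ell^2(\tilde\pi)$. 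The normalizing constant $Z$ cancels in any Rayleigh quotient, so $\lambda_A=\max_f \langle Pf,f\rangle_{\tilde\pi}/\langle f,f\rangle_{\tilde\pi}$ and hence $1-\lambda_A=\min_f \langle (I-P)f,f\rangle_{\tilde\pi}/\langle f,f\rangle_{\tilde\pi}$.

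The key algebraic step is to rewrite the numerator as a Dirichlet form with absorbing boundary conditions. Extending any $f$ defined on $T$ by setting $f_k:=0$ for every $k\in\mS_F\cup u(\mS_P)$, I would verify by direct expansion (using $w_i=\sum_{j:(i,j)\in\hat{\mE}}w_{ij}$ over \emph{all} of $\hat{\mG}$, including the edges into the absorbing vertices) that $\langle (I-P)f,f\rangle_{\tilde\pi}=\frac{1}{Z}\sum_{(x,y)\in\hat{\mE}}w_{xy}(f_x-f_y)^2=:\mathcal{Q}(f)$, while $\langle f,f\rangle_{\tilde\pi}=\frac{1}{Z}\sum_{i\in T}w_if_i^2$. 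Thus the claim $T\le 2\xi$ reduces to establishing the Poincar\'e inequality $\sum_{i\in T}w_if_i^2\le 2\xi\sum_{(x,y)\in\hat{\mE}}w_{xy}(f_x-f_y)^2$ for every $f$ vanishing on $\mS_F\cup u(\mS_P)$.

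To prove that inequality I would exploit the chosen non-self-intersecting paths $\gamma_i$ from each $i\in T$ to the absorbing set. Since $f$ is zero at the terminal vertex of $\gamma_i$, telescoping gives $f_i=\sum_{(s,t)\in\gamma_i}(f_s-f_t)$, and the weighted Cauchy--Schwarz inequality yields $f_i^2\le\Bigl(\sum_{(s,t)\in\gamma_i}w_{st}^{-1}\Bigr)\Bigl(\sum_{(s,t)\in\gamma_i}w_{st}(f_s-f_t)^2\Bigr)=|\gamma_i|_w\sum_{(s,t)\in\gamma_i}w_{st}(f_s-f_t)^2$. Multiplying by $w_i$, summing over $i\in T$, and interchanging the order of summation to group terms by the edge $(x,y)$ they traverse, the coefficient of $w_{xy}(f_x-f_y)^2$ becomes exactly $\xi(x,y)=\sum_{i:\gamma_i\ni(x,y)}w_i|\gamma_i|_w$. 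Bounding each $\xi(x,y)$ by $\xi$ then produces the Poincar\'e inequality, whence $1-\lambda_A\ge 1/(2\xi)$ and therefore $T=1/(1-\lambda_A)\le 2\xi$.

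The one point requiring care---and the main obstacle---is the bookkeeping of the second paragraph: one must confirm that the transient block is genuinely reversible with respect to the $w_i$ of (\ref{weight}), and that extending $f$ by zero on the absorbing nodes correctly turns $\langle(I-P)f,f\rangle_{\tilde\pi}$ into the \emph{full} edge sum over $\hat{\mE}$, with the edges into $\mS_F\cup u(\mS_P)$ supplying the ``killing'' terms $w_{xy}f_x^2$ that are absent in the ordinary (stochastic) Diaconis--Stroock setup. The factor $2$ in the statement is harmless slack arising when the congestion through an \emph{oriented} edge is bounded against the \emph{undirected} Dirichlet form $\mathcal{Q}(f)$; a slightly more careful accounting in fact gives $T\le\xi$, but the factor $2$ is more than enough for all the scaling conclusions drawn subsequently.
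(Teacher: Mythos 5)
Your proposal is correct and follows essentially the same route as the paper: the extremal (Rayleigh--quotient) characterization of $1-\lambda_A$ for the reversible sub-stochastic block $\tilde{A}$, the identification of the numerator with the Dirichlet form of the walk on $\hat{\mG}$ once the test function is extended by zero on $\mS_F\cup u(\mS_P)$, and the Diaconis--Stroock telescoping-plus-weighted-Cauchy--Schwarz step whose interchange of summation produces exactly $\xi(x,y)$. Your accounting of the killing terms and of the factor $2$ coincides with the paper's proof in Appendix E.
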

It is also possible to proceed along the lines of Sinclair \cite{sin2}. This gives a different bound stated in the following lemma.
\begin{lemma}\label{sinclaire}
Consider the weighted graph $\hat{\mG}(\hat{\mV}, \hat{\mE})$. Given a set of paths $\{\gamma_i: i \in \mV \backslash \mS_F\}$ from $\mV \backslash \mS_F$ to $\mS_F \cup u (\mS_P)$, we have $T \leq 2 \eta$ where
\be
\eta:=\max_{(x,y)\in \hat{\mE}}\eta(x,y),
\ee
and, for each edge $(x,y)\in \hat{\mE}$,
 \be
 \eta(x,y):=\frac{1}{w_{xy}}\sum_{i:\gamma_i \ni(x,y)} w_i|\gamma_i|.
\ee
\end{lemma}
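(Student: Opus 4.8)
The plan is to follow the route used for Lemma \ref{diaconis}, changing only the Cauchy--Schwarz step, so I would first set up the machinery common to both bounds and then isolate the one place where Sinclair's argument differs from Diaconis--Stroock's. Since Lemma \ref{conv2} already shows the error decays at rate $\lambda_A$ and $T=\frac{1}{1-\lambda_A}$, it suffices to prove the lower bound $1-\lambda_A\ge \frac{1}{2\eta}$. The first observation is that $\lambda_A$ is really the Perron eigenvalue of the substochastic block $\tilde A$ obtained by restricting $A$ to the transient set $\mV\setminus\mS_F$ (the rows of the fully stubborn agents vanish), and that $\tilde A$ is self-adjoint with respect to $\langle\cdot,\cdot\rangle_{\tilde\pi}$, because $\tilde\pi_i\tilde A_{ij}=w_{ij}/Z=\tilde\pi_j\tilde A_{ji}$. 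Hence all its eigenvalues are real, $\lambda_A$ is the spectral radius, and bounding the top eigenvalue via the Rayleigh quotient is enough:
\[
1-\lambda_A=\min_{g\neq 0}\frac{\langle (I-\tilde A)g,\,g\rangle_{\tilde\pi}}{\langle g,\,g\rangle_{\tilde\pi}}.
\]

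The next step is to recast this quotient as a Dirichlet form on $\hat{\mG}$. I would extend any test vector $g$ on $\mV\setminus\mS_F$ to all of $\hat{\mV}$ by setting $g\equiv 0$ on the absorbing set $\mS_F\cup u(\mS_P)$. Expanding $\langle (I-\tilde A)g,g\rangle_{\tilde\pi}$ using $w_{ij}=w_iP_{ij}$ together with the handshake identity, the boundary contributions drop out precisely because $g$ vanishes on $\mS_F\cup u(\mS_P)$, leaving
\[
\langle (I-\tilde A)g,g\rangle_{\tilde\pi}=\frac{1}{Z}\sum_{(x,y)\in\hat{\mE}}w_{xy}\big(g(x)-g(y)\big)^2,\qquad
\langle g,g\rangle_{\tilde\pi}=\frac{1}{Z}\sum_{i\in\mV\setminus\mS_F}w_i\,g(i)^2 .
\]
The factor $1/Z$ cancels in the ratio, so the whole lemma reduces to the Poincar\'e-type inequality $\sum_i w_i g(i)^2\le C\sum_{(x,y)}w_{xy}(g(x)-g(y))^2$ with $C\le 2\eta$.

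The core is then the canonical-path estimate. For each transient vertex $i$ the path $\gamma_i$ runs from $i$ to an endpoint in $\mS_F\cup u(\mS_P)$ where $g=0$, so telescoping gives $g(i)=\sum_{(s,t)\in\gamma_i}\big(g(s)-g(t)\big)$. This is the one point where Sinclair's method departs from Diaconis--Stroock's: rather than inserting the weights $w_{st}^{-1/2}\cdot w_{st}^{1/2}$, I apply the plain inequality $\big(\sum_{k=1}^{m}a_k\big)^2\le m\sum_k a_k^2$ with $m=|\gamma_i|$, obtaining
\[
g(i)^2\le |\gamma_i|\sum_{(s,t)\in\gamma_i}\big(g(s)-g(t)\big)^2 .
\]
Multiplying by $w_i$, summing over $i$, and exchanging the order of summation so that the outer sum runs over edges gives
\[
\sum_{i\in\mV\setminus\mS_F} w_i\,g(i)^2\le \sum_{(x,y)\in\hat{\mE}}\big(g(x)-g(y)\big)^2\!\!\sum_{i:\gamma_i\ni\{x,y\}}\!\! w_i|\gamma_i|\;\le\; 2\eta\sum_{(x,y)\in\hat{\mE}}w_{xy}\big(g(x)-g(y)\big)^2,
\]
which is the desired Poincar\'e inequality and yields $1-\lambda_A\ge \frac{1}{2\eta}$, i.e. $T\le 2\eta$.

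The only genuinely delicate point, and the one I expect to be the main obstacle, is the orientation bookkeeping in the last inequality, which is exactly where the factor of $2$ enters: since $\eta(x,y)$ is defined per \emph{oriented} edge, a single undirected edge $\{x,y\}$ may be traversed by some paths as $(x,y)$ and by others as $(y,x)$, so the total congestion across it is $w_{xy}\big(\eta(x,y)+\eta(y,x)\big)\le 2w_{xy}\eta$. The companion subtlety is making sure the zero-extension is handled correctly, so that the Dirichlet identity and the telescoping both rely on the endpoints of the $\gamma_i$ lying in $\mS_F\cup u(\mS_P)$. Everything else is a verbatim repeat of the computation behind Lemma \ref{diaconis}, with $|\gamma_i|$ replacing $|\gamma_i|_w$, and I would conclude by recording $T=\frac{1}{1-\lambda_A}\le 2\eta$.
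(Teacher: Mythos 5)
Your proposal is correct and follows essentially the same route as the paper: the extremal (Rayleigh-quotient) characterization of $1-\lambda_A$ for the reversible substochastic block $\tilde A$, the zero-extension to $\mS_F\cup u(\mS_P)$ turning the numerator into the Dirichlet form on $\hat{\mG}$, the telescoping along canonical paths, and the unweighted Cauchy--Schwarz bound $\bigl(\sum_k a_k\bigr)^2\le |\gamma_i|\sum_k a_k^2$ in place of the weighted one used for Lemma \ref{diaconis}. Your accounting of the factor $2$ via the two orientations of each edge is an equivalent bookkeeping of the paper's $\tfrac12$ prefactor in the Dirichlet form, so nothing further is needed.
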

The above lemma is very similar to the bound reported in \cite{saberi} without proof but differs by a factor of $2$. The factor $2$ is not important in investigating the order of the convergence time; however, in graphs with finite number of agents, ignoring this factor yields convergence times that are smaller than the actual convergence time. Therefore, we have included a short proof in Appendix \ref{proof-lemmas} for the above lemma.

Intuitively, both $\xi(x,y)$ and $\eta(x,y)$ are measures of \textit{congestion} over the edge $(x,y)$ due to paths that pass through $(x,y)$. In general, computing the upper-bound using Lemma \ref{sinclaire} is easier than using Lemma \ref{diaconis}.

An upper bound on $1-\lambda_A$, and thus a lower-bound on the convergence time $T$, is given by the following lemma whose proof is provided in Appendix \ref{proof-lemmas}
\begin{lemma} \label{lemma: lower}
Consider the weighted graph $\hat{\mG}(\hat{\mV}, \hat{\mE})$, then
\be \label{conductance}
{1-\lambda_A} \leq \min_{B\subseteq \mV \backslash \mS_F} \psi(B; \hat{\mG}),
\ee
where $\psi(B; \hat{\mG}):=\frac{\sum_{i \in B, j \notin B}w_{ij}}{\sum_{i \in B}w_i}$. The minimum is achieved for some connected subgraph with vertex set $B$.
\end{lemma}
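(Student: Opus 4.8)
The plan is to identify $1-\lambda_A$ with the bottom of the Dirichlet spectrum of the random walk $Y(t)$ on $\hat{\mG}$ killed upon hitting $\mS_F \cup u(\mS_P)$, and then to read off \dref{conductance} as the elementary variational upper bound obtained by feeding indicator functions into the Rayleigh quotient. This is the easy (variational) half of a Cheeger-type estimate, so no isoperimetric input is needed.

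First I would pass from the full matrix $A$ to its principal submatrix on the non-fully-stubborn agents. Since $K_i = \infty$ for $i \in \mS_F$, every such row of $A$ vanishes, so $A$ is block-triangular with a zero block and its nonzero spectrum --- in particular the Perron eigenvalue $\lambda_A$ --- equals that of $\tilde{A} := [A_{ij}]_{i,j \in \mV \backslash \mS_F}$. By \dref{weight} we have $\tilde A_{ij} = w_{ij}/w_i = P_{ij}$ for $i,j \in \mV \backslash \mS_F$, so $\tilde A$ is exactly the substochastic kernel of $Y(t)$ stopped on the boundary $\mS_F \cup u(\mS_P)$. Reversibility $w_i P_{ij} = w_{ij} = w_j P_{ji}$ makes $\tilde A$ self-adjoint on $\ell^2(\tilde\pi)$, whence $\lambda_A = \max_{f \neq 0} \langle \tilde A f, f\rangle_{\tilde\pi}/\langle f,f\rangle_{\tilde\pi}$ and therefore $1-\lambda_A \le \langle (I-\tilde A)f,f\rangle_{\tilde\pi}/\langle f,f\rangle_{\tilde\pi}$ for every test function $f$.

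The core step is the Dirichlet-form identity. Given $f$ on $\mV \backslash \mS_F$, I extend it by zero to $\bar f$ on all of $\hat{\mV}$, so that $\bar f \equiv 0$ on $\mS_F \cup u(\mS_P)$. Using $w_i \tilde A_{ij} = w_{ij}$ and expanding the quadratic form, I would verify
\be
\langle (I-\tilde A) f, f\rangle_{\tilde\pi} = \frac{1}{2Z}\sum_{i,j \in \hat{\mV}} w_{ij}(\bar f_i - \bar f_j)^2,
\ee
in which the edges leaving $\mV \backslash \mS_F$ toward the boundary contribute precisely the killing (sub-stochasticity) term. Taking $f=\mathds{1}_B$ for $B \subseteq \mV \backslash \mS_F$ collapses the numerator $\langle (I-\tilde A)f,f\rangle_{\tilde\pi}$ to $\frac{1}{Z}\sum_{i \in B, j \notin B} w_{ij}$, the total weight of edges crossing from $B$ into $\hat{\mV}\backslash B$ (boundary vertices included), and the denominator $\langle f,f\rangle_{\tilde\pi}$ to $\frac{1}{Z}\sum_{i\in B} w_i$; the factor $\frac{1}{Z}$ cancels and the ratio is exactly $\psi(B;\hat{\mG})$. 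Minimizing over $B$ delivers \dref{conductance}.

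For the final assertion that the minimizer may be taken connected, I would note that if the subgraph induced by an optimal $B$ breaks into components $B_1, \dots, B_k$, then no edge of $\hat{\mG}$ joins distinct $B_\ell$ inside $B$, so both the cut weight $\sum_{i\in B, j \notin B} w_{ij}$ and the volume $\sum_{i \in B} w_i$ are additive over the $B_\ell$. Hence $\psi(B;\hat{\mG})$ is a weighted mediant of the $\psi(B_\ell;\hat{\mG})$ and cannot be smaller than $\min_\ell \psi(B_\ell;\hat{\mG})$, so a single connected component already attains the minimum. The only point demanding care is the bookkeeping in the displayed Dirichlet identity --- confirming that extending $f$ by zero (rather than deleting the boundary edges) is what correctly encodes the killing mass at partially stubborn and fully stubborn neighbors --- but this is a routine expansion once the reversible weighted structure of $\hat{\mG}$ is set up.
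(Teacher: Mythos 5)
Your proposal is correct and follows essentially the same route as the paper: the extremal (Rayleigh-quotient) characterization of $1-\lambda_A$ for the reversible sub-stochastic kernel $\tilde{A}$, the identification of the numerator with the Dirichlet form on $\hat{\mG}$ after extending test functions by zero on $\mS_F\cup u(\mS_P)$, and the substitution of indicator functions $\mathds{1}_B$ to produce $\psi(B;\hat{\mG})$. Your mediant argument for why the minimizer can be taken connected is a nice explicit justification of a step the paper dismisses as ``easy to see.''
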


Next, as an example of applications of the above bounds, we study the special cases of the complete graph and the ring graph with one stubborn agent. In these cases, Lemma \ref{sinclaire} yields tighter results and it is also easier to use than Lemma \ref{diaconis}.

\begin{example}[Complete graph vs. Ring graph]\label{example}
Assume there is one stubborn agent, node 1, with $K_1>0$. In each case, construct the weighted graph $\hat{\mG}$ and let $\gamma=\{\gamma_i: i \in \mV\}$ be the set of shortest paths from nodes $\mV$ to $u_1$. For the complete graph, the congestion over $(1,u_1)$ is exactly
\ben
 \eta(1,u_1)&=& \frac{1}{K_1}(K_1+(n-1)+2(n-1)^2),
\een
and the congestion over any other edge $(i,1): 1 \leq i \leq n$ is simply
$
 \eta(i,1)= n-1.
$
Hence, for large values of $K_1$, the congestion is dominated by $(1,u_1)$ and for small values of $K_1$, it is dominated by an edge $(i,1)$, for some $1 \leq i \leq n$. More accurately,
$$
T \leq \left\{\begin{array}{ll}
2\frac{K_1+(n-1)+2(n-1)^2}{K_1};& \mbox{ if } K_1 \leq \frac{(n-1)+2(n-1)^2}{n-2}\\
2(n-1);& \mbox{otherwise}
\end{array}\right.
$$
Next, consider a ring graph with odd number of nodes. The congestion over $(1,u_1)$ is
\ben
 \eta(1,u_1)&=& \frac{1}{K_1}(2+K_1+2(2\sum_{i=1}^{(n-1)/2}i)).
\een
Since, for $1 \leq i \leq n$, $\eta(i,1,\gamma)\leq \eta(2,1,\gamma)$, it is enough to find the congestion over the edge $(2,1)$ which is
$
\eta(2,1)= 2\sum_{i=1}^{(n-1)/2}i.
$
This shows that
$$
T \leq \left\{\begin{array}{ll}
\frac{2+K_1+(n^2-1)/2}{K_1};& \mbox{ if } K_1 \leq \frac{8+2(n^2-1)}{n^2-5}\\
\frac{n^2-1}{4};& \mbox{otherwise}
\end{array}\right.
$$
\begin{figure} \label{time}
  \centering
  \includegraphics[width=4 in]{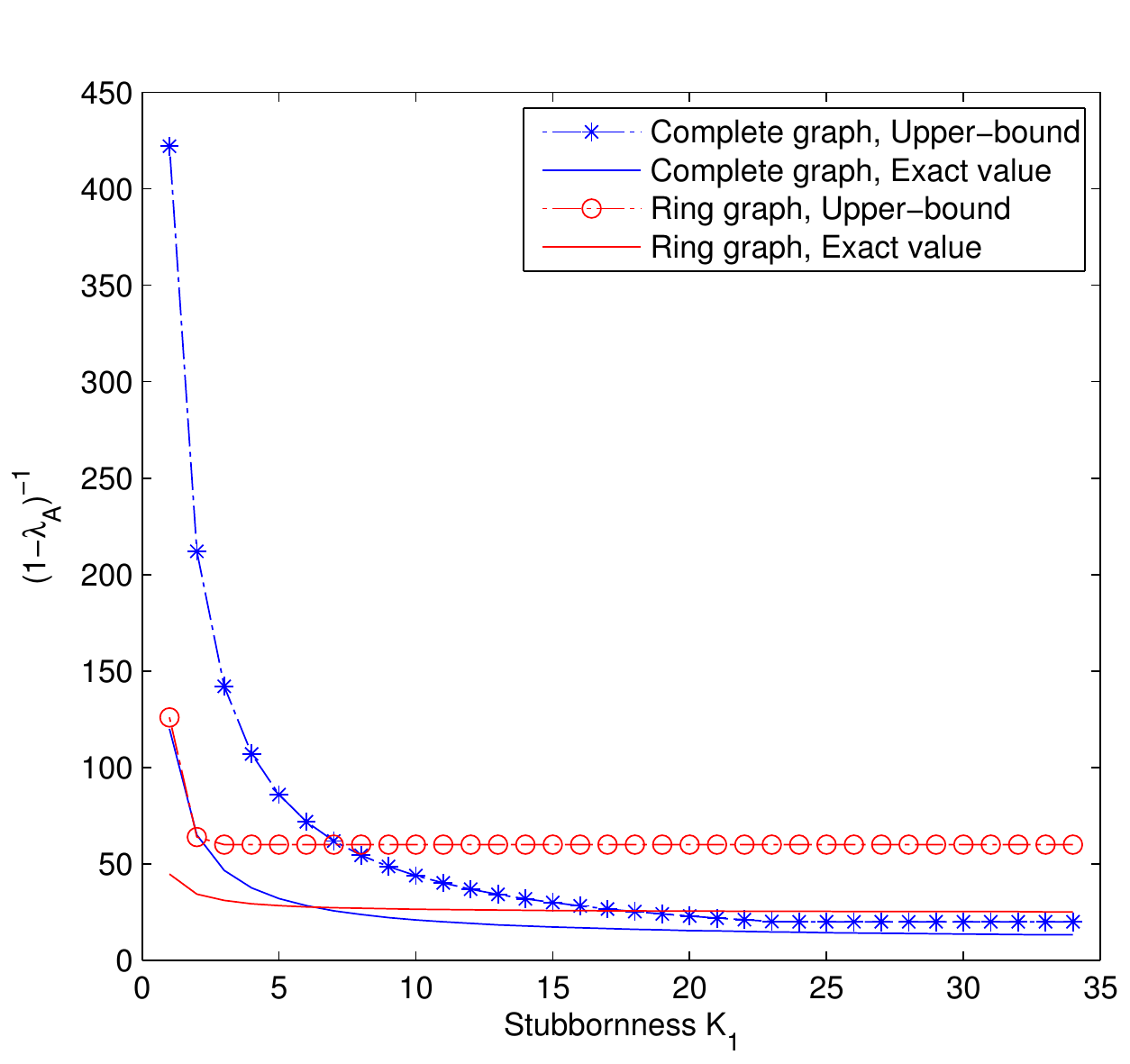}\\
  \caption{The comparison of the convergence time between complete and ring graphs with $n=11$ nodes with one stubborn agent.}\label{complete-vs-ring}
\end{figure}
Figure \ref{time} shows the upper-bound in each graph and compares it with the exact value of $\frac{1}{1-\lambda_A}$ calculated numerically for $n=11$. As $K_1 \to \infty$, the stubborn agent approaches a fully stubborn agent.

We can also compute lower-bounds, based on Lemma \ref{lemma: lower} as follows. The set $B$ that achieves the minimum in \dref{conductance}, either includes node $1$ or not. If $1 \in B$, then it is easy to see that $\min _{B \subseteq \mV: 1 \in B}\psi(B; \hat{\mG})=\psi(\mV; \hat{\mG})=\frac{K_1}{K_1+2 |\mE|}$ in both graphs. If $1 \notin B$, then
$ \min _{B \subseteq \mV: 1 \notin B}\psi(B; \hat{\mG})= \min _{B \subseteq \mV: 1 \notin B}\psi(B; \mG)$.
In the case of the complete graph,
$$\min _{B\subseteq \mV: 1 \notin B}\psi(B; \mG)=\psi(\mV\backslash\{1\}; \mG)=\frac{n-1}{(n-1)^2}=\frac{1}{n-1},$$
and in the case of the ring graph,
$$\min _{B \subseteq \mV: 1 \notin B}\psi(B; \mG)= \psi(\mV\backslash\{1\}; \mG)=\frac{2}{2(n-1)}=\frac{1}{n-1}.$$
Hence, the lower bound is the following. For the complete graph
$$
T \geq \left\{\begin{array}{ll}
\frac{K_1+n(n-1)}{K_1};& \mbox{ if } K_1 \leq \frac{n(n-1)}{(n-2)}\\
n-1;& \mbox{otherwise,}
\end{array}\right.
$$
and for the ring
$$
T \geq \left\{\begin{array}{ll}
\frac{K_1+2n}{K_1};& \mbox{ if } K_1 \leq \frac{2n}{(n-2)}\\
n-1;& \mbox{otherwise.}
\end{array}\right.
$$
\end{example}
\subsection{Canonical bounds via shortest paths}
Let $\gamma=\{\gamma_i: i \in \mV \backslash \mS_F\}$ be the set of \textit{shortest paths} from vertices $\mV \backslash \mS_F $ to a the set $\mS_F \cup u(\mS_P)$, so, in fact, for each $i \in \mV \backslash \mS_F$, $\gamma_i=\gamma_{ij}$ for some $j \in \mS_F \cup u(\mS_P)$. See Figure \ref{short} for an example. Let $\Gamma_j \subseteq \mV \backslash \mS_F$ be the set of nodes that are connected to $j \in \mS_F \cup u(\mS_P)$ via the shortest paths. Also let $|\gamma|:=\max_{i \in \mV \backslash \mS_F } |\gamma_i|$ be the length of maximum of such shortest paths and $|\Gamma|:=\max_{j \in \mS_F \cup u(\mS_P)}|\Gamma_j|$ be the maximum number of nodes connected to any node in $\mS_F \cup u(\mS_P)$. For example, in Figure \ref{short}, $|\Gamma|=4$ and $|\gamma|=3$.
\begin{figure}
  \centering
  \includegraphics[width=4 in]{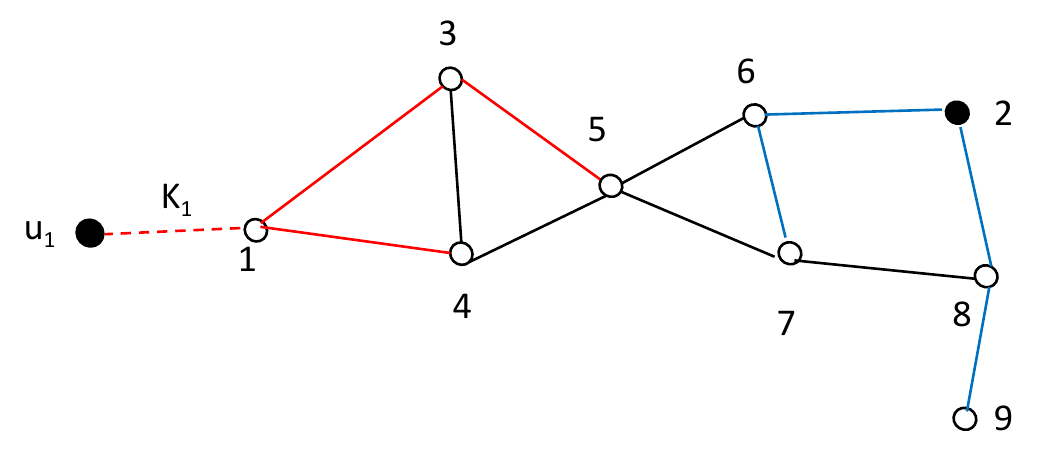}\\
  \caption{A social network consisting of $9$ agents. Vertex $1$ is a partially stubborn agent and vertex $2$ is a fully stubborn agent. The colored edges show the shortest paths from the non-stubborn agents to the set of stubborn agents.}\label{short}
\end{figure}

For each partially stubborn agent $j \in \mS_P$,
\ben
 \eta(j,u_j) & =& \frac{1}{K_j}(K_j+d_j+\sum_{i \in \Gamma_j}d_i |\gamma_i|)\\
 & \leq & 1+\frac{\hat{d}+|\gamma||\Gamma|\tilde{d}}{K_{min}},
\een
where $\tilde{d}:=\max_{i \in \mV \backslash \mS} d_i$ is the maximum degree of non-stubborn agents, $\hat{d}:=\max_{i \in \mS} d_i$ is the maximum degree of stubborn agents, and $K_{min}:=\min_{j \in \mS_P} K_j$ is the minimum stubbornness. Hence, the congestion is dominated by some edge $(j,u_j)$, $j \in \mS_P$, only if the stubbornness $K_j$ is sufficiently small.

It is easy to show that all the paths that pass through an edge $(x,y) \in \mE$ are connected to the same $j \in \mS_F \cup u(\mS_P)$, or equivalently to the same stubborn agent. So for each $(x,y)\in \mE$,
\ben
\eta(x,y)&=& \sum_{i: \gamma_i \ni (x,y)}  d_i |\gamma_i| \leq |\gamma|B \tilde{d},
\een
 where
\be \label{bottleneck}
B:=\max_{(x,y) \in \mE}|\{i: \gamma_i \ni (x,y)\}|,
 \ee
 is the \textit{bottleneck constant}, i.e., the maximum number of shortest paths that pass through any link of the social network. It is clear that $|\Gamma|/\hat{d}\leq B \leq |\Gamma|$ because the maximum bottleneck is at least equal to the bottleneck over an edge directly connected to a stubborn agent. Therefore, for $K_{min} \leq K^*:=\frac{\hat{d}+|\gamma||\Gamma|\tilde{d}}{|\gamma|B\tilde{d}-1}$, $\eta$ is dominated by congestion over some edge $(j,u_j)$, $j \in \mS_P$, and
 \be \label{cong1}
T \leq 2\left(1+\frac{\hat{d}+|\gamma||\Gamma|\tilde{d}}{K_{min}}\right) .
 \ee
For $K_{min}> K^*$, $\eta$ is dominated by an edge of the social network which is the bottleneck, and in this regime
  \be \label{cong2}
T\leq {2|\gamma|B\tilde{d}}.
  \ee
Dependence on $|\gamma|$, in both regimes, intuitively makes sense as it represents the minimum time required to reach any node in the network from stubborn agents.
Hence, the convergence time in general depends on the structure of the social network and the location of the stubborn agents and their levels of stubbornness. There is a dichotomy for high and low levels of stubbornness. For high levels of stubbornness, and in the extreme case of fully stubborn agents, the opinion of the stubborn agent is almost fixed and the convergence time is dominated by the the bottleneck edge and the structure of the social network. For low levels of stubbornness, the transient opinion of stubborn agent may deviate a lot from its equilibrium which could deteriorate the speed of convergence. In fact, for very low levels of stubbornness, this could be the main factor in determining the convergence time. It is worth pointing out that adding more fully stubborn agents, with not necessarily equal initial opinions, or increasing the stubbornness of the agents makes the convergence faster.

\subsection{Scaling Laws}
In this section, we use the canonical bounds to derive scaling laws for the convergence time as the size of the social network $n$ grows.
For any social network, we can consider two cases: (i) There exists no fully stubborn agent, i. e., all the stubborn agents are partially stubborn (ii) At least, one of the agents is fully stubborn.

In both cases, the upper-bound on the convergence time is given by \dref{cong1} and \dref{cong2} depending on the levels of stubbornness of partially stubborn agents. In case (ii), if all the stubborn agents are fully stubborn, then the upper-bound on the convergence time is given by \dref{cong2}.

To get the lower-bounds, we consider the set $B$ in \dref{conductance} to include all the nodes $\mV \backslash \mS_F$. This gives the following lower-bound for the case (i)
\be \label{lowerbound}
T \geq 1+\frac{2|\mE|}{\sum_{j \in \mS_P}K_j},
\ee
and, for the case (ii),
\be \label{lowerbound2}
T \geq \frac{\sum_{j \in \mS_P}K_j+2 |\mE|-\sum_{j \in \mS_F}d_j}{\sum_{j \in \mS_P}K_j+\sum_{j \in \mS_F}d_j}.
\ee

In investigating the scaling laws, the scaling of the number of stubborn agents and their levels of stubbornness with $n$ could play an important role. In the rest of this section, we study scaling laws in graphs with a fixed number of stubborn agents, with fixed levels of stubbornness, as the total number of agent $n$ in the network grows. Then, in any connected graph $\mG$, the smallest possible lower-bound on the convergence time is $T=\Omega(n)$ in the case (i), and $T=\Omega(\frac{|\mE|}{\sum_{j \in \mS_F}d_j})$ in the case (ii) which could be as small as $\Omega(1)$. It is possible to combine the upperbounds \dref{cong1} and \dref{cong2} to obtain a (looser) upper-bound that holds for a fixed number of stubborn agents, with any mixture of partially/fully stubborn agents. Let $d_{max}$ be the maximum degree of the social graph (possibly depending on $n$). The upper-bounds show that $T=O(|\gamma|nd_{max})$ for $K_{min}$ small enough (i.e., smaller than a threshold depending on the structure of the graph) and $T=O(|\gamma|Bd_{max})$ otherwise. Recall that $B$ was the bottleneck constant, and obviously $B < n$, implying that $
T=O(n|\gamma|d_{max}),
$
for a fixed number of stubborn agents consisting of any mixture of partially/fully stubborn agents. Furthermore, it should be clear that $|\gamma|$ is at most equal to the diameter $\delta$ of the graph, hence, as a naive bound,
\be\label{loose cong}
T=O(n\delta d_{max}).
\ee Dependence on the diameter intuitively makes sense as it represents the minimum time required to reach any node in the network from an arbitrary stubborn agent.
\subsection*{Fastest Convergence}
It should be intuitively clear that a graph $\mG$ with a stubborn agent directly connected to $n-1$ non-stubborn agents and with no edges between the non-stubborn agents should have the fastest convergence. In fact, if the stubborn agent is partially stubborn (case (i)), construct $\hat{\mG}$ by adding an extra node $u_1$ and connect it to the stubborn agent $1$ by an edge of weight $K_1$. Then, it is easy to check that $\eta(1,u_1)=1+\frac{3(n-1)}{K_1}$ and $\eta(i,1)=2$, $2 \leq i \leq n$. Hence, $T=O(n)$, and considering the general lower-bound $T=\Omega(n)$, $T=\Theta(n)$ is indeed the sharp order. If the stubborn agent is fully stubborn , then $\eta(i,1)=1$ for all $i$, thus $T \leq 2$ achieving the lower-bound in case (ii).
\subsection*{Complete graph and Ring graph}
For the complete graph, with a fixed set of stubborn agents, $\tilde{d}=\hat{d}=n-1$, $|\Gamma|=O(n)$ and $|\gamma|=2$ and $K^*=O(n)$. If at least one of the agents is partially stubborn, the upper-bound follows from \dref{cong1} which gives $T=O(n^2)$, and, considering the lowerbound \dref{lowerbound}, $T=\Theta(n^2)$ is the right order. If all the stubborn agents are fully stubborn, the upper-bound follows from \dref{cong2} which is $T=O(n)$ because obviously $B=1$ in the complete graph. Using \dref{lowerbound2} gives a lower-bound $\Omega(n)$, so $T=\Theta(n)$ if there is a fixed number of fully stubborn agents.

For the ring network, $\Omega(n)$ is a lower-bound in both cases. To get an upper-bound note that $K^*=O(1)$, however, $B=O(n)$, $\tilde{d}=\hat{d}=2$, $|\Gamma|=O(n)$, and $|\gamma|=O(n)$. Hence, for all fixed levels of stubbornness, and for any fixed number of fully/partially stubborn agents, $T=O(n^2)$.

Figure \ref{time} verifies the results above, as we saw, in the case of one stubborn agent with a fixed $K_1$, and $n$ large enough (larger than a constant depending on the value of $K_1$), the ring network has a faster convergence than the complete graph. For any fixed $n$, and $K_1$ large enough, the complete graph has a faster convergence than the ring.
\subsection*{Expander graphs and Tress}

Expanders are graph sequences such that any graph in the sequence has good expansion property, meaning that there exists $\alpha > 0$ (independent of $n$) such that each subset $S$ of nodes with size $|S| \leq n/2$ has at least $\alpha |S|$ edges to the rest of the network. Expander graphs have found extensive applications in computer science and mathematics (see the survey of \cite{hoory} for a discussion of several applications). An important class of expanders are $d$-regular expanders, where each node has a constant degree $d$. Existence of $d$-regular expanders, for $d >2$, was first established in \cite{pinsker} via a probabilistic argument. There are various explicit constructions of $d$-regular expander graphs, e.g., the Zig Zag construction in \cite{reingold} or the construction in \cite{alon}.

Recall the naive upper-bound \dref{loose cong} when there is a fixed number of (fully/partially) stubborn agents. So, for any bounded degree graph, with maximum degree $d > 2$, and diameter $\delta$, $T=O(n\delta)$. It is easy to see that the diameter of a bounded degree graph, with maximum degree $d$, is at least $\log_{d-1} n$ (Lemma 4.1, \cite{draief}). In fact, for a $d$-regular tree or a $d$-regular expander, $\delta =O(\log n)$ \footnote{To show the latter, consider the lazy random walk over a $d$-regular expander graph, i.e., with transition probability matrix $P=\frac{M}{2d}+\frac{\mI}{2}$ where $M$ is the graph's adjacency matrix. Then, it follows from Cheeger's inequality and the expansion property, that the spectral gap $1-\lambda_2(P) \geq \frac{\alpha^2}{8 d^2}$. Using the relation between the special gap and the diameter $\delta < \frac{\log n}{1-\lambda_2(P)}$ \cite{milman}, we get $\delta \leq \frac{8 d^2}{\alpha^2}\log n$.}. Hence, for these graphs, $T=O(n\log n)$ which is almost as fast as the smallest possible convergence time $\Omega(n)$ when there is at least on partially stubborn agent. When all the stubborn agents are fully stubborn, $T=O(n \log n)$ still holds, by \dref{cong2} because $B=\Theta(n)$ in any bounded degree graph, but, in this case, the convergence is slow compared to the best possible convergence time $\Omega(1)$.
\subsection*{Erdos-Renyi Random Graphs}
Consider an Erod-Renyi random graph with $n$ nodes where each node is connected to any other node with probability $p$, i.e., each edge appears independently with probability $p$. To ensure that the graph is connected, we consider $p=\frac{\lambda \log n}{n}$ for some number $\lambda > 1$. Assume there are a fixed set of stubborn agents with fixed stubbornness parameters. Using the well-known results, the maximin degree of an Erdos-Renyi random graph is $O(\log n)$ with high probability, i.e., with probability approaching to $1$ as $n$ grows~\cite{bollobas}. Also we know that the diameter is $O\left(\frac{\log n}{\log np}\right)=O\left(\frac{\log n}{\log (\lambda\log n) }\right)$ with high probability (in fact, the diameter concentrates only on a few distinct values \cite{chung}). Hence, using the naive upper-bound \dref{loose cong} gives $T=O\left(n\frac{\log^2 n}{\log\log n}\right)$ with high probability. This is very close to the best possible convergence time in case (i) but far from the best possible convergence time in case (ii).
\subsection*{Small world graphs}
The Erdos-Renyi model does not capture many spatial and structural aspects of social networks and, hence, is not a realistic model of social networks \cite{draief}. Motivated by the small world phenomenon observed by Milgram \cite{milgram}, Strogatz-Watts \cite{sw} and Kleinberg \cite{kleinberg} proposed models that illustrate how graphs with spatial structure can have small diameters, thus, providing more realistic models of social networks. We consider a variant of these models, proposed in \cite{draief}, and characterize the convergence time to equilibrium in presence of stubborn agents. We consider two-dimensional graphs for simplicity but results are extendable to the higher dimensional graphs as well.

Start with a social network as a grid $\sqrt{n}\times \sqrt{n}$ of $n$ nodes. Hence, nodes $i$ and $j$ are neighbors if their $l_1$ distance $\|i-j\|=|x_i-x_j|+|y_i-y_j|$ is equal to 1. Following the arguments for the bounded-degree graphs with fixed number of stubborn agents, $T=O(n \delta)$, and in the grid, $\delta=2\sqrt{n}$ obviously, which yields $T=O(n \sqrt{n})$. Note that changing the location of the stubborn agents can change the convergence time only by a constant and does not change the order.

Now assume that each node creates $q$ shortcuts to other nodes in the network. A node $i$ chooses another node $j$ as the destination of the shortcut with probability $\frac{\|i-j\|^{-\alpha}}{\sum_{k \neq i}\|i-k\|^{-\alpha}}$, for some parameter $\alpha >0$. Parameter $\alpha$ determines the distribution of the shortcuts as large values of $\alpha$ produce mostly local shortcuts and small values of $\alpha$ increase the chance of long-range shortcuts. In particular, $q=1$ and $\alpha=0$ recovers the Strogatz-Watts model where the shortcuts are selected uniformly at random. It is shown in \cite{flaxman} that for $\alpha < 2$, the graph is an expander with high probability and hence, using the inequality between the diameter and the spectral gap \cite{milman}, its diameter is of the order of $O(\log n)$ with high probability. We also need to characterize the maximum degree in such graphs. The following lemma is probably known but we were not able to find a reference for it, hence, we have included its proof in Appendix \ref{proof-lemma-degree} for completeness.
\begin{lemma}\label{degree}
Under the small-world network model, $d_{max}=O(\log n)$ with high probability.
\end{lemma}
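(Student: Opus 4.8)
The plan is to bound the degree of an arbitrary vertex $j$ by splitting it into three contributions and controlling each separately. The underlying grid contributes at most $4$ (constant) edges, and the $q$ shortcuts that $j$ itself creates contribute at most $q$ (constant) more. The only term that can grow with $n$ is the number of \emph{incoming} shortcuts, i.e., shortcuts created by other nodes whose random destination happens to be $j$; call this quantity $D_j$. Writing $Z_i:=\sum_{k\neq i}\|i-k\|^{-\alpha}$ for the normalizing constant at node $i$, we have $D_j=\sum_{i\neq j}\sum_{\ell=1}^{q}\mathds{1}[\text{$\ell$-th shortcut of $i$ lands on $j$}]$, which is a sum of independent Bernoulli variables with $\mathbb{E}[D_j]=q\sum_{i\neq j}\|i-j\|^{-\alpha}/Z_i$. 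It therefore suffices to show $\mathbb{E}[D_j]=O(\log n)$ (in fact $O(1)$ in the regime $\alpha<2$ of interest) and then invoke a concentration inequality followed by a union bound.

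First I would estimate the two grid sums. Since the number of lattice points at $\ell_1$-distance exactly $r$ from a fixed node is $\Theta(r)$ (and at most $4r$), a direct comparison with $\sum_r r^{1-\alpha}$ gives, for $0<\alpha<2$, both $Z_i=\Theta(n^{1-\alpha/2})$ and $\sum_{i\neq j}\|i-j\|^{-\alpha}=\Theta(n^{1-\alpha/2})$. Substituting these into the expression for $\mathbb{E}[D_j]$ yields $\mathbb{E}[D_j]=\Theta(q)=O(1)$. For $\alpha=2$ the two sums are both $\Theta(\log n)$, and for $\alpha>2$ they are $\Theta(1)$ while $Z_i$ is bounded below by a constant; hence in all cases $\mathbb{E}[D_j]=O(\log n)$.

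With $\mu:=\mathbb{E}[D_j]$ in hand, I would apply the Chernoff bound for sums of independent indicators, $\mathbb{P}(D_j\geq t)\leq e^{-\mu}(e\mu/t)^{t}$ for $t\geq\mu$. Taking $t=c\log n$ makes the log of the tail probability approximately $-c\log n\,(\log\log n-O(1))\to-\infty$, so $\mathbb{P}(D_j\geq c\log n)\leq n^{-2}$ for a suitable constant $c$ and all large $n$ (the case $\alpha=2$, where $\mu=\Theta(\log n)$, is handled the same way by choosing $c$ larger than the constant in $\mu$). A union bound over the $n$ vertices then gives $\mathbb{P}(\max_j D_j\geq c\log n)\leq n^{-1}\to0$, and adding back the constant grid and outgoing-shortcut contributions yields $d_{\max}=O(\log n)$ with high probability.

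The step I expect to be the main obstacle is the uniform lower bound $Z_i=\Omega(n^{1-\alpha/2})$ over \emph{all} nodes $i$, including corner and boundary nodes, when $\alpha<2$; without it the naive bound $Z_i\geq2$ only produces a polynomially large estimate for $\mathbb{E}[D_j]$. The resolution is that even a corner node still sees $\Theta(r)$ lattice points at distance $r$ for all $r\lesssim\sqrt{n}$, so the boundary geometry changes the sum by at most a constant factor and the $\Theta(n^{1-\alpha/2})$ lower bound survives. Everything else — the upper grid-sum estimate, the independence used to form the Bernoulli sum, and the Chernoff/union-bound argument — is routine.
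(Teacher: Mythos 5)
Your proof is correct and follows essentially the same route as the paper's: bound the degree by $4+q$ plus the number of incoming shortcuts, show the expected number of incoming shortcuts is $O(1)$ (resp.\ $O(\log n)$) by comparing the grid sums $\sum_r r\cdot r^{-\alpha}$ for the normalizer and for the attraction to a fixed node, and finish with a Chernoff bound for independent non-identically distributed Bernoulli variables plus a union bound over the $n$ vertices. The only differences are cosmetic (you use the multiplicative form $e^{-\mu}(e\mu/t)^t$ of the Chernoff bound rather than the $\exp(-h(\epsilon)\mu)$ form, and you are more explicit than the paper about why boundary nodes do not spoil the lower bound on the normalizing constant).
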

Hence, putting everything together, using the upper-bound \dref{loose cong}, we get $T=O(n\log^2 n)$. This differs from the smallest possible convergence time in case (i) by a factor of $\log ^2 n$ but far from $\Omega(1)$ in case (ii).
\section{Concluding Remarks}\label{conclusions}
We viewed opinion dynamics as a local interaction game over a social network. When there are no stubborn agents, the best-response dynamics converge to a common opinion in which the impact of the initial opinion of each agent is proportional to its degree. In the presence of stubborn agents, the dynamics converge to an equilibrium in which the opinion of each agent is a convex combination of the initial opinions of the stubborn agents. The coefficients of such convex combination are related to appropriately defined hitting probabilities of the random walk over the social network's graph. An alternative interpretation is based on an electrical network model of the social network where, at equilibrium, the opinion of each agent is simply its voltage in the electrical network.

The bounds on the convergence time in the paper can be interpreted in terms of location and stubbornness levels of stubborn agents, and graph properties such as diameter, degrees, and the so-called bottleneck constant \dref{bottleneck}. The bounds provide relatively tight orders for the convergence time in the case of a fixed number of partially stubborn agents (case (i)) but there is a gap between the lower-bound and the upper-bound when some of the stubborn agents are fully stubborn (case (ii)). Tightening the bounds in case (ii) remains as a future work.

At this point, we discuss the implication of our results in applications where limited advertising budget is to be used to convince a few agents to adopt, for example, a certain opinion about a product/topic. The goal is the optimal selection of such agents to trigger a faster spread of the advertised opinion throughout the social network. This, in turn, implies that, over a finite time, more agents will be biased towards the advertised opinion. Similar \textit{leader selection} problems have been discussed in~\cite{borkar, clark, yildiz} for different applications where the goal is to select a set of $M$ agents with fixed states to optimize the system performance such as minimizing the convergence time to consensus~\cite{borkar}, minimizing the error when the observations are noisy~\cite{clark}, or maximizing the impact of stubborn agents on the long-run expected opinions of the agents \cite{yildiz}. The common methodology is to show that the objective is a sub-modular set function and use the sub-modular optimization framework in, e.g., \cite{modular}, to produce a greedy procedure where agents are added according to a greedy sequence. Although the greedy algorithm is useful, it does not answer the question in its simplest form $M=1$, and may involve the inversion of typically large matrices \cite{borkar} in social networks with very large number of users.

Using the simple bound \dref{cong2}, the question is reduced to where to place a fixed number of fully-stubborn agents in order to minimize $|\gamma|B\tilde{d}$. Recall that $\gamma$ is the maximum length of shortest paths from non-stubborn agents to stubborn agents, $B$ is the bottleneck constant defined in \dref{bottleneck}, and $\tilde{d}$ is the maximum degree of non-stubborn agents. Since empirical graphs of social networks exhibit small-world network characteristics, $|\gamma|$ is already very small (it is less than the diameter of the graph which is already a $\log n$ quantity), and hence, the product $B \cdot \tilde{d}$ is the dominating factor. It is believed that degree distribution in many networks, such as social networks, Internet topology, WWW induced graph, Hollywood graph, etc, follows a power-law distribution (see~\cite{newman} for a survey with more examples). Heuristically, when there are a few very high degree nodes and most of the nodes are of low degrees, selecting the high degree nodes, as stubborn agents, reduces $\tilde{d}$ dramatically and also reduces $B$ because many agents (the neighbors of the stubborn agents) are now directly connected to the stubborn agents. On the other hand, selecting the possibly low degree nodes of bottleneck edges as stubborn agents reduces $B$ but this reduction is at most by a factor equal to low degrees of such nodes. Therefore, in general, the high degree nodes seem to be good candidates for placement of stubborn agents. It will be certainly interesting to establish the validity of such a heuristic more rigorously.
\appendices\label{proofs}
\section{Proof of Lemma \ref{conv1}}\label{proof-lemma-conv1}
A similar result as Lemma \ref{conv1} is standard when analyzing the convergence of the probability distribution of a Markov chain. Opinion dynamics are different than probability evolution in Markov chains, but the same ideas as in the case of probability distributions also work when analyzing the convergence of opinion dynamics. We present the proof of Lemma \ref{conv1} here for completeness.

We can compute the error recursively as follows.
\be
e(t+1)&=&Ax(t)-\mathds{1}_n\pi^Tx(0)-\mathds{1}_n\pi^Tx(t)+\mathds{1}_n\pi^Tx(t)\\
& =& Ax(t)-A\mathds{1}_n\pi^Tx(0)-\mathds{1}_n\pi^Tx(t)+\mathds{1}_n\pi^TA^tx(0)\\
& =& Ax(t)-A\mathds{1}_n\pi^Tx(0)-\mathds{1}_n\pi^Tx(t)+\mathds{1}_n\pi^Tx(0)\\
& =& Ax(t)-A\mathds{1}_n\pi^Tx(0)-\mathds{1}_n\pi^Tx(t)+\mathds{1}_n\pi^T\mathds{1}_n\pi^T x(0)\\
& =& (A-\mathds{1}_n\pi^T)(x(t)-\mathds{1}_n\pi^Tx(0))\\
& =& (A-\mathds{1}_n\pi^T) e(t).
\ee
$(A, \pi)$ is reversible, thus $D^{1/2}AD^{-1/2}$ is a symmetric matrix where $D=\mathrm{diag}(\pi)$. Then, it is well-known, e.g., see \cite{pier}, that $A$ has real eigenvalues and $n$ distinct right eigenvectors $v_1,v_2,\cdots,v_n$ and $n$ distinct left eigenvectors $u_1,u_2,\cdots u_n$ such that $u_i=D v_i$. So, it follows from orthogonality of left and right eigenvectors that $\langle v_i, v_j \rangle_{\pi}=\delta_{ij}$, where $\delta_{ij}=1$ if $i=j$ and is zero otherwise. Therefore, we have
\ben
(A-\mathds{1}_n\pi^T)\mathds{1}_n=\mathds{1}_n-\mathds{1}_n=0
\een
and, for $i\geq 2$,
\ben
(A-\mathds{1}_n\pi^T)v_i=\lambda_iv_i-\mathds{1}_n\pi^Tv_i=\lambda_iv_i,
\een
Following standard line of arguments as in \cite{pier}, $e(t)=\sum_{i=1}^n \langle e(t),v_i \rangle_{\pi}v_i$.
therefore,
\ben
(A-\mathds{1}_n\pi^T) e(t)=\sum_{i=2}^n\lambda_i \langle e(t),v_i \rangle_{\pi}v_i.
\een
and
\ben
\|e(t+1)\|^2_{\pi}&=&\sum_{i=2}^n\lambda_i^2 \langle e(t),v_i \rangle_{\pi}^2\|v_i\|^2_{\pi}\\
& = &\sum_{i=2}^n\lambda_i^2 \langle e(t),v_i \rangle_{\pi}^2\\
& \leq & \rho_2^2 \sum_{i=2}^n\langle e(t),v_i \rangle_{\pi}^2\\
& =& \rho_2^2 \|e(t)\|^2_{\pi},
\een
where $\rho_2:=\max_{i\neq 1} {|\lambda_i|}$ is the SLEM of $A$. So
$
\|e(t+1)\|_{\pi} \leq \rho_2 \|e(t)\|_{\pi},
$
and, accordingly,
$
\|e(t)\|_{\pi}\leq \rho_2^t \|e(0)\|_{\pi}.
$
\section{Proof of Lemma \ref{lemma: convex}}\label{proof-lemma-convex}
The transition probability matrix of the random walk over $\hat{\mG}$ is given by
\be \label{P matrix}
P=\left[\begin{array}{ll}
\hat{A}_{n \times n} & \hat{B}_{n \times |\mS_P|} \\
\mI_{|\mS_P|} & 0
\end{array}\right].
\ee
$\mI_{|\mS_P|}$ is the identity matrix of size $|\mS_P|$, i.e., when the walk reaches $u_i$, it returns to its corresponding stubborn agent $i$ with probability $1$. Nonzero elements of $\hat{A}$ correspond to transitions between vertices of $\mV$. Nonzero elements of $\hat{B}$ correspond to transitions from a partially stubborn agent $i \in \mS_P$ to $u_i$. The matrices $\hat{A}$ and $A$ only differ in the rows corresponding to agents $\mS_F$ which are all-zero rows in $A$. Notice that $x_i(t)=x_i(0)$ for all $i \in \mS_F$ and $t \geq 0$. Hence, we can focus on the dynamics of $\tilde{x}(t)=[x_i(t): i \in \mV \backslash \mS_F]^T$.

Let $\tilde{A}$ be the matrix obtained from $\hat{A}$ (or $A$) by removing rows and columns corresponding to fully stubborn agents $\mS_F$. Let $\hat{A}_{\mS_F}$ denote the columns of $\hat{A}$ corresponding to $\mS_F$. Let $\tilde{B}$ be the matrix obtained from $B$ by (i) replacing the columns corresponding to fully stubborn agents $\mS_F$ with $\hat{A}_{\mS_F}$ (or ${A}_{\mS_F}$), (ii) removing rows corresponding to $\mS_F$, (iii) removing the columns corresponding to non-stubborn agents (which are all zero columns). Then, we have
$$
\tilde{x}(t+1)=\tilde{A}\tilde{x}(t)+\tilde{B}x_{\mS}(0).
$$
where $x_{\mS}(0)=[x_i(0): i \in S]^T$. Note that both $A$ and $\tilde{A}$ have the same largest eigenvalue, i.e., $\lambda_A=\lambda_{\tilde{A}}$. The dynamics converge to the equilibrium $\tilde{x}(\infty)=(\mI-\tilde{A})^{-1}\tilde{B}x_{\mS}(0)$.

For each vertex $i \in \mV$, and $j \in \mS_F$, let $F_{ij}:=\mathbb{P}_i(\tau=\tau_{j})$ be the probability that random walk hits $j$ first, among vertices in $\mS_F \cup u(\mS_P)$, given the random walk starts from vertex $i$. Also, for each vertex $i \in \mV$, and $u_j \in u(\mS_P)$, let $F_{ij}:=\mathbb{P}_i(\tau=\tau_{u_j})$ be the probability that random walk hits $u_j$ first, among vertices in $\mS_F \cup u(\mS_P)$, given the random walk starts from vertex $i$. Then, we have the following recursive formulas for the $F_{ij}$ probabilities. For every $i \in \mV \backslash \mS_F$ and every $j \in S_F$,
\be
F_{ij}=\hat{A}_{ij}+\sum_{k \in \mV \backslash \mS_F} \hat{A}_{ik}F_{kj},
\ee
and for every $i \in \mV \backslash \mS_F$ and every $j \in \mS_P$,
\be
F_{ij}=\hat{B}_{ij}+\sum_{k \in \mV \backslash \mS_F} \hat{A}_{ik}F_{kj}.
\ee
Note that $\tilde{B}$ is $[\hat{B} \hat{A}_{\mS_F}]$ without the rows corresponding to $\mS_F$. Hence, putting the two equations together in the matrix form, $F=\tilde{B}+\tilde{A}F$ or $F=(\mI-\tilde{A})^{-1}\tilde{B}$.

Note that for any $i \in \mS_F$, $F_{ii}=1$ and $x_i(t)=x_i(0)$ at all times $t \geq 0$. Hence, the equilibrium at each node $i\in \mV$, is a \textit{convex combination of initial opinions of stubborn agents}, where
\be
x_i(\infty)=\sum_{j \in \mS} F_{ij}x_j(0).
\ee
\section{Proof of Lemma \ref{elec}}\label{proof-lemma-elec}
Recall graph $\hat{\mG}$ with edge weights $\{w_{ij}: (i,j) \in \hat{\mE}\}$. By \dref{individual dynamics}, and taking the limit as $t \to \infty$, the equilibrium is the solution to the following set of linear equations
\be \label{equi3}
x_i(\infty)=\frac{1}{w_i}\sum_{j \in \partial_i}w_{ij}x_j(\infty),
\ee
for each node $i \in \hat{\mV}$, with boundary conditions $x_{u_i}(\infty)=x_i(0)$, $i \in \mS_P$, and $x_{i}(\infty)=x_i(0)$ for $i \in \mS_F$. Now assume each edge $(i,j)\in \hat{\mE}$ has a conductance $w_{ij}$ and vertices $\mS_F \cup u(\mS_P)$ are voltage sources where the voltage of each source $i \in \mS_F$ is $x_i(0)$ volts and the voltage of each source $u_j \in  u(\mS_P)$, $j \in \mS_P$, is $x_j(0)$ volts. Let $v_i$ be the voltage of node $i$. Kirchhoff's current law states that the total current entering each node must be zero, i.e., for each node $i \in \mV \backslash \mS_F$,
$
\sum_{j \in \partial_i}w_{ij}(v_i-v_j)=0
$
or equivalently,
\be
w_i v_i=\sum_{j \in \partial_i}w_{ij}v_j
\ee
which, comparing to \dref{equi3}, shows that $x_i(\infty)=v_i$. Note that having a fully stubborn agent $i$, with $K_i=\infty$, corresponds with connecting $i$ to a fixed voltage of $x_i(0)$ volts with an edge of infinite conductance (short circuit). Hence, $K_i$'s can be interpreted as the internal conductance of the voltage sources. A fully stubborn agent $i$ with $K_i =\infty$ corresponds to an ideal voltage source with zero internal resistance.
\section{Proof of Lemma \ref{conv2}}\label{proof-lemma-conv2}
From the definition of $e(t)$,
\ben
e(t)&=&A^tx(0)+\sum_{s=0}^{t-1}A^{s}Bx(0)-\sum_{s=0}^{\infty}A^{s}Bx(0)\\
&=& A^tx(0)-\sum_{s=t}^{\infty}A^{s}Bx(0)\\
& =& A^t \left(x(0)-\sum_{s=0}^{\infty}A^{s}Bx(0)\right)
\een
Hence $e(t+1)=Ae(t)$. Let $\lambda_A$ denote the largest eigenvalue of the irreducible sub-stochastic matrix $A$.
Trivially $e_i(t)=0$ for all fully stubborn agents $i \in \mathcal{S}_F$. Let $\tilde{e}(t):=(e_i(t): i \in \mV \backslash \mS_F)^T$ denote the vector of errors without the fully stubborn agents. Then $\tilde{e}(t)=\tilde{A}\tilde{e}(t-1)$ holds, where $\tilde{A}$ is the matrix obtained from $A$ by removing rows and columns corresponding to agents $\mathcal{S}_F$. Note that $\tilde{A}$ and $A$ have the same largest eigenvalue, i.e., $\lambda_A=\lambda_{\tilde{A}}$.

Consider the Markov chain defined by $P$ in (\ref{P matrix}). It is easy to check that $P$ is reversible with respect to a distribution $\pi=(\pi_i=\frac{w_i}{Z}: i \in \hat{\mV})^T$ where $w_i$ is the weighted degree of vertex $i$, given by \dref{weight}, and $Z=2 (|\mE|+ \sum_{i \in \mS_P}K_i)$ is the normalizing constant\footnote{By definition of reversibility, $\pi_iP_{ij}=\pi_j P_{ji}$ for all $i,j \in \hat{\mV}$}.
Note that $\pi_i\tilde{A}_{ij}=\pi_j\tilde{A}_{ji}$ holds for all $i,j \in \mV \backslash \mS_F$. By minor abuse of terminology, we would also call $\tilde{A}$ reversible with respect to the distribution $\tilde{\pi}=\left(\pi_i/\pi(\tilde{A}): i \in \mV \backslash \mS_F\right)^T$, where $\pi(\tilde{A})$ is the normalization constant. Let $\tilde{D} =\mathrm{diag}(\tilde{\pi})$. Then, using the same trick as in the characterization of eigenvalues of a reversible stochastic matrix, $A^*=\tilde{D}^{1/2}\tilde{A}\tilde{D}^{-1/2}$ is symmetric and has the same (real) eigenvalues as $\tilde{A}$. Moreover $A^*$ is diagonalizable with a set of equal right and left eigenvectors $\theta_1, \cdots, \theta_{n-|\mS_F|}$. Correspondingly, if $u_1, \cdots, u_{n-|\mS_F|}$ denote the left eigenvectors of $\tilde{A}$ and $v_1, \cdots, v_{n-|\mS_F|}$ denote its right eigenvectors, it should hold that $u_i=\tilde{D}v_i$. Also from the orthogonality of $\theta_i's$, we have $\langle u_i, u_j \rangle _{1/\tilde{\pi}}=\delta_{ij}$ and $\langle v_i, v_j \rangle _{\tilde{\pi}}=\delta_{ij}$. Using $\{v_1, \cdots, v_{n-|\mS_F|}\}$ as a base for $ \mathds{R}^{n-|\mS_F|}$, $\tilde{e}(t)$ can be expressed as
\ben
\tilde{e}(t)=\sum_{i=1}^{{n-|\mS_F|}}\langle \tilde{e}(t),v_i \rangle_{\tilde{\pi}}v_i,
\een
so
$$
\tilde{A}\tilde{e}(t)=\sum_{i=1}^n \lambda_i \langle \tilde{e}(t),v_i \rangle_{\tilde{\pi}}v_i.
$$
Therefore,
\ben
\|\tilde{e}(t+1)\|^2_{\tilde{\pi}}&=&\sum_{i=1}^n\lambda_i^2 \langle \tilde{e}(t),v_i \rangle_{\tilde{\pi}}^2\|v_i\|^2_{\tilde{\pi}}\\
& = &\sum_{i=1}^n\lambda_i^2 \langle \tilde{e}(t),v_i \rangle_{\tilde{\pi}}^2\\
& \leq & \lambda_A^2 \sum_{i=1}^n\langle \tilde{e}(t),v_i \rangle_{\tilde{\pi}}^2\\
& =& \lambda_A^2 \|\tilde{e}(t)\|^2_{\tilde{\pi}},
\een
So $
\|\tilde{e}(t+1)\|_{\tilde{\pi}} \leq \lambda_A \|\tilde{e}(t)\|_{\tilde{\pi}}.
$
Accordingly, $
\|\tilde{e}(t)\|_{\tilde{\pi}}\leq \lambda_A^t \|\tilde{e}(0)\|_{\tilde{\pi}}.
$
\section{Proofs of Lemmas \ref{diaconis}, \ref{sinclaire}, and \ref{lemma: lower}}\label{proof-lemmas}
The three Lemmas are based on the extremal characterization of the eigenvalues. First, we present an extremal characterization for the largest eigenvalue of a sub-stochastic (and reversible) matrix. Then, we state the proofs of individual lemmas.

$A$ and $\tilde{A}$ have the same largest eigenvalue (recall that $\tilde{A}$ is obtained from $A$ by removing rows and columns corresponding to fully stubborn agents $\mS_F$, as in Appendix \label{proof-lemma-convex}.). Consider the Markov chain defined by $P$ in (\ref{P matrix}). $P$ is reversible with respect to $\pi=(\pi_i=\frac{w_i}{Z}: i \in \hat{\mV})^T$ where $w_i$ is the weighted degree of vertex $i$, given by \dref{weight}, and and $Z=2 (|\mE|+ \sum_{i \in \mS_P}K_i)$ is the normalizing constant. Note that $\pi_i\tilde{A}_{ij}=\pi_j\tilde{A}_{ji}$ holds for all $i,j \in \mV \backslash \mS_F$. By minor abuse of terminology, we would also call $\tilde{A}$ reversible with respect to the distribution $\tilde{\pi}=\left(\pi_i/\pi(\tilde{A}): i \in \mV \backslash \mS_F\right)^T$, where $\pi(\tilde{A})$ is the normalization constant. Thus, it follows from extremal characterization of eigenvalues \cite{horn, pier} that
$$
1-\lambda_A=\inf_{f \neq 0} \frac{\langle(\mI-\tilde{A})f,f\rangle_{\tilde{\pi}}}{\langle f,f\rangle_{\tilde{\pi}}}
$$
where the infimum is over all functions $f:\mV \backslash \mS_F \to \mathds{R}$. The above characterization can also be written as
$$
1-\lambda_A=\inf_{g \neq 0} \frac{\langle(\mI-\hat{A})g,g\rangle_{{\pi}}}{\langle f,f\rangle_{{\pi}}}
$$
where now the infimum is over all functions $g: \mV \to \mathds{R}$ with $g(\mS_F)=0$.
%
Recall the random walk over the weighted graph $\hat{\mG}(\hat{\mV}, \hat{\mE})$ with transition probability matrix $P$ \dref{P matrix}. Equivalently, we can also write the characterization as
$$
1-\lambda_A=\inf_{\phi \neq 0} \frac{\langle(\mI-{P})\phi,\phi\rangle_{\pi}}{\langle \phi,\phi\rangle_{\pi}}.
$$
where now the infimum is over functions $\phi:\hat{\mV} \to \mathds{R}$, such that $\phi\left(\mS_F \cup u(\mS_P)\right)=0$  \footnote{It is worth pointing out that function $f$, accordingly $g$ or $\phi$, that achieves the infimum is the right eigenvector corresponding to $\lambda_A$ and from Peron-Ferobenius theorem, it must be nonnegative.}.
Then, $\langle(\mI-{P})\phi,\phi\rangle_{\pi}=\mE(\phi,\phi)$ where $\mE(\phi,\phi)$ is the Dirichlet form
$$
\mE(\phi,\phi)=\frac{1}{2}\sum_{i,j \in \hat{\mV}}\pi_i {P}_{ij}(\phi(i)-\phi(j))^2
$$
which, in terms of the edge weights of $\hat{\mG}$, is equal to
$$
\mE(\phi,\phi)=\frac{1}{2w}\sum_{i,j \in \hat{\mV}}w_{ij}(\phi(i)-\phi(j))^2.
$$
where $w:=\sum_{i \in \hat{\mV}}w_i$. Similarly,
$$
\langle \phi,\phi\rangle_{\pi}=\frac{1}{w}\sum_{i \in \mV \backslash \mS_F}w_i\phi^2(i).
$$
Define a path from vertex $i$ to vertex $j$, as a collection of \textit{oriented} edges that connect $i$ to $j$. For any vertex $i \in \mV \backslash \mS_F$, consider a path $\gamma_{i}$ from $i$ to the set $\mS_F\cup u(\mS_P)$ that does not intersect itself, i.e., $\gamma_i=\{(i,i_1), (i_1,i_2), \cdots, (i_m,j)\}$ for some $j \in \mS_F\cup u(\mS_P)$. Then, we can write $\phi(i)=\sum_{(x,y)\in \gamma_i} (\phi(x)-\phi(y))$.
\begin{proof}[Proof of Lemma \ref{diaconis}]
The result follows from the extremal characterization of $1-\lambda_A$. Note that
\be
\langle \phi,\phi\rangle_{\pi} & = & \frac{1}{w}\sum_{i \in \mV \backslash \mS_F}w_i\left(\sum_{(x,y)\in \gamma_i} (\phi(x)-\phi(y))\right)^2 \nonumber \\
& =& \frac{1}{w}\sum_{i \in \mV \backslash \mS_F}w_i\left(\sum_{(x,y)\in \gamma_i} \frac{1}{\sqrt{w_{xy}}}\sqrt{w_{xy}}(\phi(x)-\phi(y))\right)^2 \nonumber  \\
& \leq & \frac{1}{w}\sum_{i \in \mV \backslash \mS_F}w_i\left(\sum_{(x,y)\in \gamma_i} \frac{1}{w_{xy}}\right)\left(\sum_{(x,y)\in \gamma_i} w_{xy}(\phi(x)-\phi(y))^2\right) \label{cauchy1}
\ee
\be
& =& \frac{1}{w}\sum_{i \in \mV \backslash \mS_F}w_i|\gamma_i|_w\left(\sum_{(x,y)\in \gamma_i} w_{xy}(\phi(x)-\phi(y))^2\right) \nonumber  \\
&=& \frac{1}{w}\sum_{x,y \in \hat{\mV}} w_{xy}(\phi(x)-\phi(y))^2\left(\sum_{i: \gamma_i \ni (x,y) } w_i|\gamma_i|_w\right) \nonumber  \\
& \leq &2 \mE(\phi,\phi) \xi
\ee
This concludes the proof. Inequality \dref{cauchy1} is an application of Cauchy-Schwarz inequality.
\end{proof}
\begin{proof}[Proof of Lemma \ref{sinclaire}]
The proof is again based on the extremal characterization. Note that
\be
\langle \phi,\phi\rangle_{\pi} & = & \frac{1}{w}\sum_{i \in \mV \backslash \mS_F}w_i\left(\sum_{(x,y)\in \gamma_i} (\phi(x)-\phi(y))\right)^2 \nonumber \\
& \leq & \frac{1}{w}\sum_{i \in \mV \backslash \mS_F}w_i|\gamma_i|\sum_{(x,y)\in \gamma_i} (\phi(x)-\phi(y))^2 \label{cauchy2}\\
& =& \frac{1}{w}\sum_{x,y \in \hat{\mV}}(\phi(x)-\phi(y))^2\sum_{i:\gamma_i \ni(x,y)} w_i|\gamma_i| \nonumber\\
& =& \frac{1}{w}\sum_{x,y \in \hat{\mV}}w_{xy}(\phi(x)-\phi(y))^2\frac{1}{w_{xy}}\sum_{i:\gamma_i \ni(x,y)} w_i|\gamma_i| \nonumber \\
& \leq & 2 \mE(\phi,\phi) \eta.
\ee
which concludes the proof. Inequality \dref{cauchy2} follows from Cauchy-Schwarz inequality.
\end{proof}
\begin{proof}[Proof of Lemma \ref{lemma: lower}]
To find an upper bound on $1-\lambda_A$, consider indicator functions of the form $\mathds{1}_B(i)$, $B \subseteq \mV\backslash \mS_F$, in the extremal characterization of eigenvalues. Then, we have
 \ben
 1-\lambda_A &\leq & \frac{\mE(\mathds{1}_B,\mathds{1}_B)}{\langle \mathds{1}_B, \mathds{1}_B \rangle_\pi}\\
 & =&  \frac{\sum_{i \in B, j \notin B}w_{ij}}{\sum_{i \in B}w_i} =:\psi(B; \hat{\mG})
 \een
 And accordingly,
 $
 1-\lambda_A \leq  \min_{B \subseteq \mV\backslash \mS_F}\psi(B; \hat{\mG}).
 $
 It is easy to see that the minimizing $B$ is the vertex set of a connected subgraph of $\mG \backslash \mS_F $ in the above minimization.
\end{proof}
\section{Proof of Lemma \ref{degree}}\label{proof-lemma-degree}
First, we show that the mean degree is less than a constant for all $n$ for all $\alpha \geq 0$. 
\ben
\sum_{k \neq i}\|i-k\|^{-\alpha} & \geq & \sum_{l=1}^{\sqrt{n}/2} (l)l^{-\alpha} \geq  \int_1^{\sqrt{n}/2} x^{1-\alpha}dx =\frac{1}{2 -\alpha}((\sqrt{n}/2)^{2-\alpha}-1)\\
& \geq & \frac{(\sqrt{n}/2)^{2-\alpha}}{2 -\alpha},\ \mbox{ for }\alpha \neq 2,
\een
and $\log(\sqrt{n}/2)$ for $\alpha=2$.
For any node $u$, $d_u \leq 4 + q + \sum_{i=1, i \neq u}^n \sum_{s=1}^q X_{is}$, where $X_{is}$ is a Bernoulli random variable indicating if the s-th shortcut, $1 \leq s \leq q$, from $i$ is connected to $u$ or not. Hence $d_u$ is the summation of $(n-1)q$ independent random variables which are not necessarily identically distributed except those random variables that correspond to shortcuts from the same node. Let $\bar{d_u}:=\expect{d_u}$. Hence
\ben
\bar{d_u} &\leq& 4 + q +q \sum_{l=1}^{2\sqrt{n}} (4l)l^{-\alpha}\frac{2 -\alpha}{(\sqrt{n}/2)^{2-\alpha}} \\
& \leq &4 + q +q \frac{2 -\alpha}{(\sqrt{n}/2)^{2-\alpha}} \Big(1+\int_1^{2\sqrt{n}+1} x^{1-\alpha}dx\Big)\\
& = & 4 + q +q \frac{2 -\alpha}{(\sqrt{n}/2)^{2-\alpha}}\Big(1+\frac{1}{2-\alpha}(2\sqrt{n}+1)^{2-\alpha}-1)\Big)\\
& \leq & 4 + q +2q  6^{2-\alpha} \leq 77 q
\een
and for $\alpha=2$,
\ben
\bar{d_u}& \leq &4 + q +q \frac{1}{\log (\sqrt{n}/2)} \Big(1+\int_1^{2\sqrt{n}+1} x^{-1}dx\Big)\\
& = & 4 + q +q \frac{1}{\log (\sqrt{n}/2)}\Big(1+\log(2\sqrt{n}+1)\Big)\\
& \leq & 4 + q +4q \leq 9q
\een
Next, we use the following version of the checkoff bound for summation of independent, but not identically distributed, Bernoulli random variables (Lemma 2.4 of \cite{draief}),
\ben
\prob{d_u-\bar{d_u}\geq \epsilon \bar{d_u}} \leq \exp(-h(\epsilon)\bar{d}_u)
\een
where $h(x):=(1+x)\log (1+x)-x$. Let $\epsilon = \frac{\log n}{\bar{d}_u}$, then $h(\epsilon) \geq \frac{\log n}{\bar{d}_u} (\log \frac{\log n}{\bar{d}_u}-1)$, thus,
\ben
\prob{d_u-\bar{d_u}\geq \epsilon \bar{d_u}} \leq n^{1-\log \frac{\log n}{\bar{d}_u}}\leq n^{1-\log \frac{\log n}{77q}}=n^{1+\log (77q)-\log \log n}.
\een
By the union bound
\ben
\prob{\exists u: d_u-\bar{d_u}\geq \epsilon \bar{d_u}} \leq n^{2+\log (77q)-\log \log n}.
\een
so, as $n \to \infty$, $d_{max} =O(\log n)$ with high probability.

\end{document}